\newtheorem{lemma}{Lemma}
\newtheorem{definition}{Definition}
\newtheorem{theorem}{Theorem}
\theoremstyle{plain}
\newcommand{\rev}[1]{{#1}} 
\newcommand{\com}[1]{\textbf{\color{red} (COMMENT: #1)}} 
\newcommand{\comg}[1]{\textbf{\color{green} (COMMENT: #1)}} 
\newcommand{\response}[1]{\textbf{\color{green} (RESPONSE: #1)}} 
\newcommand{\rev}[1]{#1}
\newcommand{\com}[1]{}
\newcommand{\comg}[1]{}
\newcommand{\response}[1]{}
\def\N{\mathcal{N}}
\begin{document}


\title{ An Overlapping Coalition Game Approach for Collaborative Block Mining and Edge Task Offloading in MEC-assisted Blockchain Networks }

\author{Licheng Ye, \emph{Graduate Student Member, IEEE}, Zehui Xiong, \emph{Senior Member, IEEE}, Lin Gao, \emph{Senior Member, IEEE}, and Dusit Niyato, \emph{Fellow, IEEE}
\thanks{This work was supported in part by the National Key Research and Development Program of China under Grant 2021YFB2900300, in part by the Natural Science Foundation of Guangdong Province under Grant 2024A1515010178, in part by the Shenzhen Science and Technology Program under Grant KQTD20190929172545139, Grant GXWD20231129103946001, Grant KJZD20240903095402004, and Grant ZDSYS20210623091808025, in part by the Guangdong Basic and Applied Basic Research Foundation under Grant 2023A1515012819, and in part by the National Research Foundation, Singapore and Infocomm Media Development Authority under its Future Communications Research \& Development Programme (FCP-NTU-RG-2022-010 and FCP-ASTAR-TG-2022-003), Singapore Ministry of Education (MOE) Tier 1 (RG87/22 and RG24/24), the NTU Centre for Computational Technologies in Finance (NTU-CCTF), and the RIE2025 Industry Alignment Fund - Industry Collaboration Projects (IAF-ICP) (Award I2301E0026), administered by A*STAR, as well as supported by Alibaba Group and NTU Singapore through Alibaba-NTU Global e-Sustainability CorpLab (ANGEL).}
\thanks{L. Ye and L. Gao are with the School of Electronics and Information Engineering and the Guangdong Provincial Key Laboratory of Aerospace Communication and Networking Technology, Harbin Institute of Technology, Shenzhen 518055, China (e-mail: gavin\_ye.km@foxmail.com; gaol@hit.edu.cn).
Z. Xiong is with the School of Electronics, Electrical Engineering and Computer Science (EEECS), Queen's University Belfast, Belfast, BT7 1NN, U.K. (z.xiong@qub.ac.uk).
D. Niyato is with the College of Computing and Data Science, Nanyang Technological University, Singapore (email: dniyato@ntu.edu.sg).
(\emph{Corresponding Author: Lin Gao})
}
\thanks{Part of the results have been published in IEEE GLOBECOM 2023 \cite{add-1}.}
}

\IEEEtitleabstractindextext{
\begin{abstract}

Mobile edge computing (MEC) is a promising technology that enhances  the efficiency  of mobile blockchain networks, by enabling miners, often acted by mobile users (MUs) with limited computing resources, to offload resource-intensive mining tasks to nearby edge computing servers.
Collaborative block mining can further boost  mining efficiency by allowing multiple miners to form \emph{coalitions}, pooling their computing resources and transaction data together to mine new blocks collaboratively.
Therefore, an MEC-assisted collaborative blockchain network can leverage the strengths of both technologies, offering improved efficiency, security, and scalability for blockchain systems.
While existing research in this area has mainly focused on the \emph{single-coalition} collaboration mode, where each miner can only join one coalition, this work explores a more comprehensive \emph{multi-coalition} collaboration mode, which allows each miner to join multiple coalitions.
To analyze the behavior of miners and the edge computing service provider (ECP) in this scenario, we propose a novel two-stage Stackelberg game.
In Stage I, the ECP, as the leader, determines the prices of computing resources for all MUs.
In Stage II, each MU decides the coalitions to join, resulting in an \emph{overlapping coalition formation (OCF) game}; Subsequently, each coalition decides how many edge computing resources to purchase from the ECP, leading to an \emph{edge resource competition (ERC) game}.
We derive the closed-form Nash equilibrium for the ERC game, based on which we further propose an OCF-based alternating algorithm to achieve a stable coalition structure for the OCF game and develop a near-optimal pricing strategy for the ECP's resource pricing problem.
Simulation results show that the proposed multi-coalition collaboration mode can improve the system efficiency by $12.64\% \sim 17.63\%$, compared to the traditional single-coalition collaboration mode.

\end{abstract}
\begin{IEEEkeywords}
Blockchain, collaborative block mining, mobile edge computing, coalition game.
\end{IEEEkeywords}}

\maketitle
\IEEEdisplaynontitleabstractindextext
\IEEEpeerreviewmaketitle

\section{Introduction}\label{Introduction}

\subsection{Background and Motivations}\label{Backgroud}

Blockchain is an innovative decentralized ledger technology that ensures secure and transparent transactions without the need for a centralized intermediary \cite{1,2,3}.
It achieves this through the use of cryptography and consensus mechanisms, which protect data integrity and prevent unauthorized tampering.
The decentralized and distributed nature of blockchain enables secure and trustless communication among mobile devices in wireless networks.
This helps reduce communication latency and enhances network scalability.
As a result, it has been considered as a key enabling technology for the future 6G network \cite{4,5}.

In blockchain systems, transactions are recorded in \emph{blocks} and sequentially linked to form a chain, called ``blockchain''.
In this system, \emph{miners} play a crucial role in validating transactions and appending them to the blockchain through a \emph{mining process}.
During this process, miners compete to publish blocks by solving proof-of-work (PoW) puzzles, which requires substantial computational resources.
In mobile blockchain networks, miners are typically mobile users (MUs) using portable devices (e.g., smartphones and laptops).
These devices may not have sufficient computing resources to execute the computation-intensive mining tasks \cite{7,8,9}.
To this end, \emph{Mobile edge computing (MEC)} \cite{10,11,12,13} has emerged as a promising solution to enhance the efficiency of mobile blockchain networks.
It allows miners with limited computing resources to offload the computation-intensive mining tasks to edge computing servers that offer greater processing  capabilities.
Existing studies on MEC-assisted blockchain networks \cite{14,15,16,17,add-3,add-4} often assume that miners act in a \emph{non-cooperative} manner, where miners mines new blocks independently and competitively.
This approach, however, can result in resource wastage and profit loss for miners.

To enhance mining efficiency and increase miner profitability, some researchers have proposed a novel \emph{collaborative block mining} scheme \cite{18,19, 20}.
This scheme enables multiple miners to form a \emph{coalition}, pooling their computing resources and transaction data together to mine new blocks collaboratively.\footnote{Typical examples of such coalitions are  \emph{mining pools} (e.g., F2Pool, Antpool, and ViaBTC) for commercial blockchain networks like Bitcoin   \cite{19}.}
First, pooling computing resources allows miners to work together on mining tasks,   reducing resource redundancy and improving  mining efficiency.
Second, aggregating transaction data enables miners to prioritize transactions with higher fees when constructing new blocks, thereby maximizing the expected rewards \cite{20}.
As a result, an \emph{MEC-assisted collaborative blockchain network} that integrates both MEC and collaborative block mining can leverage the combined advantages of these technologies to improve efficiency, security, and scalability of blockchain systems, ultimately promoting the widespread adoption of blockchain technology.
Existing research in this area \cite{21,22,23,24, add-5} mainly focused on the \emph{single-coalition} collaboration mode, where each miner participates in only one collaborative group (\emph{coalition}).
However, this limitation restricts the full potential of collaborative block mining.
In this work, we   explore a more comprehensive \emph{multi-coalition} collaboration mode, where each miner can join multiple coalitions,   leading to an \emph{Overlapping Coalition Formation (OCF) game}.\footnote{Here, the term ``overlapping'' refers to the situation  where different coalitions   share some common miners.
The OCF game has been applied in various wireless networks to model user collaboration, including collaborative smartphone sensing \cite{smartphone-sensing}, non-orthogonal multiple access systems \cite{25}, and collaborative unmanned aerial vehicle networks \cite{26}.}


\begin{figure}[t]
	\centering
	\includegraphics[width=3.1in]{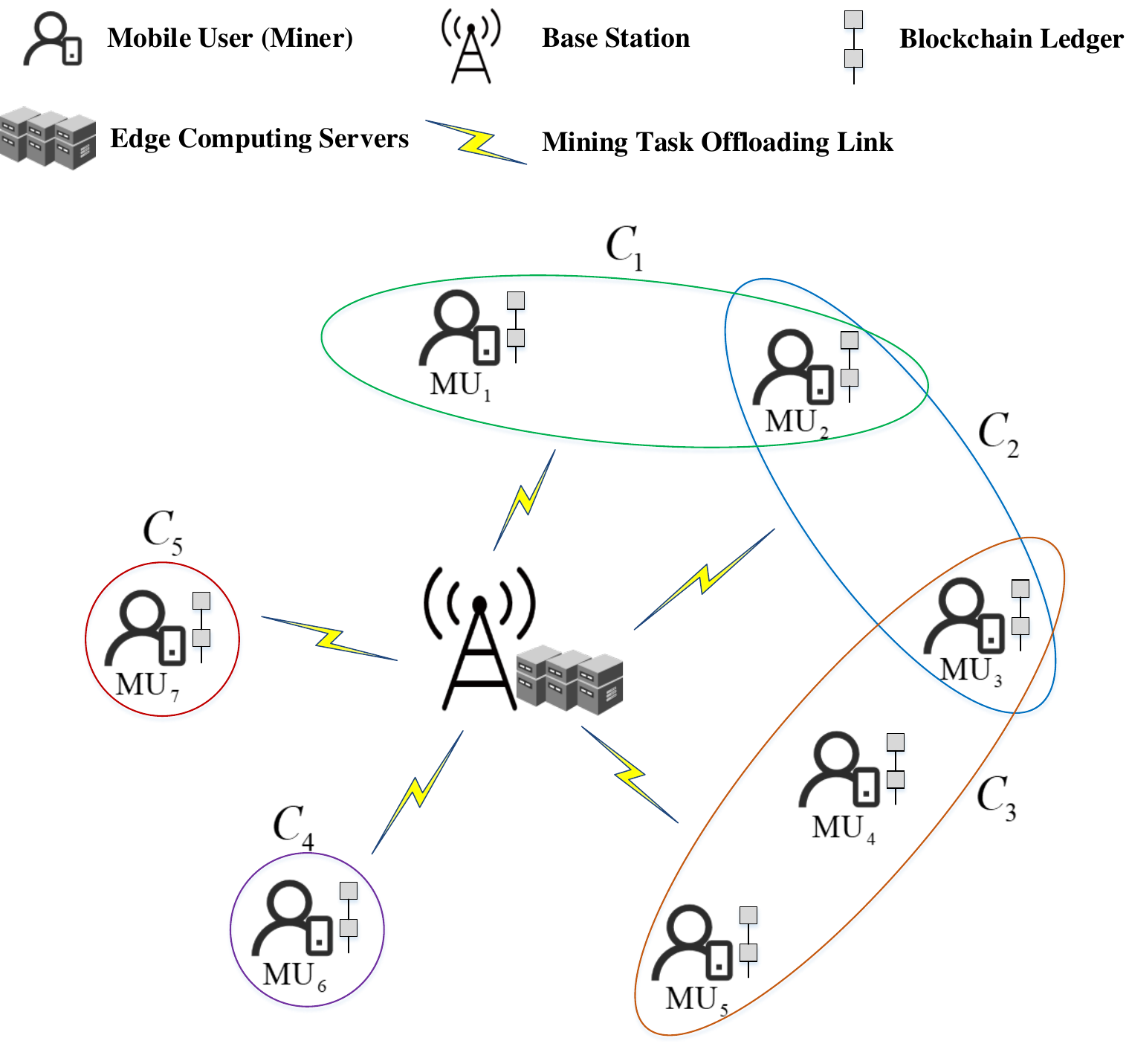}
	\vspace{-3mm}
	\caption{An Example of MEC-assisted Collaborative Blockchain Network with Multi-Coalition Collaboration Mode.}
	\label{fig:system-model}
	\vspace{-3mm}
\end{figure}

\subsection{Solution and Contributions}

Specifically, in this work, we investigate an MEC-assisted collaborative blockchain network, which consists of multiple MUs with limited computing resources, acting as miners and mining blocks collaboratively, and one edge computing service provider (ECP), deploying edge computing servers on base stations for MUs to offload their mining tasks.
Moreover, we explore a novel multi-coalition collaboration mode, where MUs can form coalitions to mine blocks collaboratively, and meanwhile each MU can join multiple coalitions.
Compared to the traditional single-coalition collaboration mode in \cite{21,22,23,24,add-5}, this new mode can improve the potential of collaborative block mining.
It also more accurately reflects the behavior of miners in real-world blockchain networks, where miners are often not restricted to joining a single mining pool.
Fig.~\ref{fig:system-model} illustrates such an MEC-assisted collaborative blockchain network with $7$ MUs and $5$ coalitions, where $\rm{MU}_{1}$ and $\rm{MU}_{2}$ form a coalition $C_1$,
$\rm{MU}_{2}$ and $\rm{MU}_{3}$ form a coalition $C_2$,
$\rm{MU}_{3}$, $\rm{MU}_{4}$, and $\rm{MU}_{5}$ form a coalition $C_3$,
and
$\rm{MU}_{6}$ and $\rm{MU}_{7}$ form the single-user coalitions $C_4$ and $C_5$, respectively.
It is clear to see that $C_1$ and $C_2$ overlap as they share the common member $\rm{MU}_{2}$.
Similarly, $C_2$ and $C_3$ overlap as they share the common member $\rm{MU}_{3}$.

In this scenario, we aim to explore the following problems for MUs, coalitions, and the ECP, respectively.

\emph{1) For MUs:} 
Should they form coalitions for collaborative block mining, and if so, how should they do so?

\emph{2) For the formed coalitions:} How much computing resources should they purchase from the ECP?

\emph{3) For the ECP:} How should it price the computing resources offered to  MUs?

To analyze the behavior of miners, coalitions, and the ECP in such a scenario, we formulate a novel \emph{two-stage Stackelberg game}, as shown in Fig.~\ref{Framework}, which consists of a resource pricing problem (for the ECP) in Stage I, followed by a   coalition formation game (for the miners) and a resource competition game (for the coalitions) in Stage II\footnote{Stackelberg game has been widely used to analyze user behaviors and strategic interactions in wireless networks \cite{game-1,game-2,game-3}.}.
More specifically, in Stage I, the ECP, acting as the leader, determines the prices of computing resources for all MUs.
In Stage II, each MU first selects one or multiple coalitions to join, resulting in an \emph{Overlapping Coalition Formation (OCF) game}; and subsequently, each coalition decides how much edge computing resources to purchase from the ECP, leading to an \emph{Edge Resource Competition (ERC) game}.\footnote{In practice, the coalition decision can be made by the head miner in each coalition, e.g., the mining pool manager.}
It is important to note that the proposed game presents significant challenges, due to  the ``overlapping'' nature of coalitions as well as the
the coupling of the OCF game with the ERC game.

We address these challenges and analyze the game equilibrium systematically using backward induction.
Specifically, for the ERC game among coalitions in Stage II.B, we derive the closed-form Nash equilibrium (NE) based on the Karush-Kuhn-Tucker (KKT) conditions.
For the OCF game in Stage II.A and resource pricing in Stage I, we propose an OCF-based alternating algorithm that converges to a stable coalition structure for the OCF game and provides a near-optimal pricing strategy for the ECP.
In summary, the main contributions of this work are as follows.


\begin{figure}[t]
	\centering
	\includegraphics[width=3.1in]{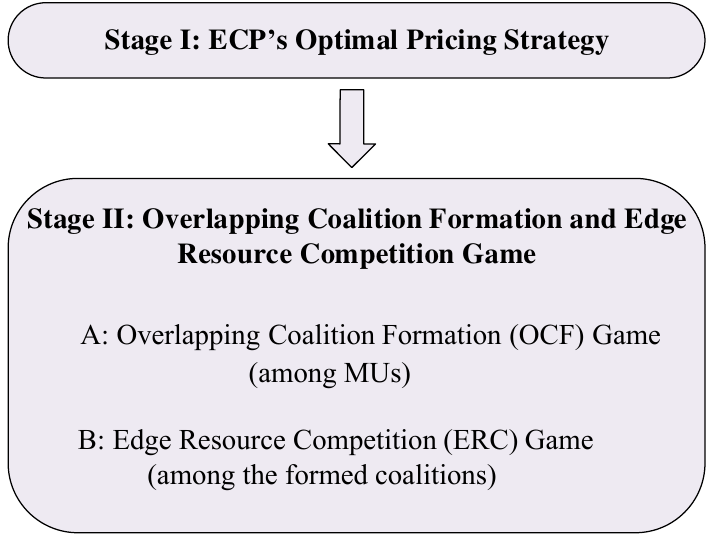}
	\caption{Framework of the Proposed Game.}
	\label{Framework}
	\vspace{-3mm}
\end{figure}

\begin{itemize}
	\item \textit{Novel Model Scenario}:
	We study a novel MEC-assisted collaborative blockchain network with a multi-coalition collaboration mode, which allows each MU to join multiple coalitions.
This new mode is more efficient, and meanwhile more accurately reflects the behavior of
miners in real-world blockchain networks.
	\item \textit{Comprehensive Game-Theoretic Analysis}:
	We formulate a novel two-stage Stackelberg game to characterize the collaborative and competitive behaviors of MUs.
	We provide a comprehensive analysis for the game equilibrium, and further design the OCF-based alternating algorithm that converges to the equilibrium.

	\item \textit{Performance Evaluation and Insights}:
	Numerical simulation results show that the proposed multi-coalition collaboration mode can significantly enhance block mining efficiency.
	In particular, compared to the non-cooperative mode, the single-coalition collaboration mode can increase the system's utility by  $10.42\% \sim 12.48\%$, while the multi-coalition collaboration mode can further improve the system's utility by $12.64\% \sim 17.63\%$.

\end{itemize}

The rest of this work is organized as follows.
In Section \ref{Literature Review}, we provide the literature review.
In Section \ref{SystemModel}, we present the system model and the problem formulation.
In Section \ref{problemFormulation}, we analyze the two-stage Stackelberg game and propose the OCF-based alternating iteration algorithm.
We present the numerical simulation results in Section \ref{simulation}, and finally conclude in Section \ref{conclusion}.

\section{Literature Review}\label{Literature Review}

\subsection{MEC-assisted Blockchain Network}

Mobile edge computing has garnered significant attention within blockchain networks due to its accessibility to computing resources.
Many recent studies have investigated MEC-assisted blockchain networks \cite{14,15,16,17}.
For example, Xiong \emph{et al.} \cite{14} explored the management of edge computing resources and pricing in blockchain networks.
Guo \emph{et al.} \cite{15} addressed the challenge of edge task offloading in mobile blockchains using Stackelberg game theory and double auction mechanisms.
Jiang \emph{et al.} \cite{16} introduced a multi-leader multi-follower Stackelberg game model for MEC-assisted block mining.
Wang \emph{et al.} \cite{17} examined the computation offloading problem for miners within MEC-assisted blockchain networks using a Markov game framework, proposing a learning-based algorithm to ensure Nash equilibrium.
\rev{Huang \emph{et al.} \cite{add-3} investigates the resource pricing and scheduling in edge-assisted blockchain networks using a three-stage multi-leader multi-follower Stackelberg game model, where each edge provider's pricing strategy is derived through best response algorithms.
Ling \emph{et al.} \cite{add-4} investigates the strategies of MEC servers in blockchain networks, where the MEC servers price the computing resources and decide whether they mine for themselves or not.}
These prior studies have utilized game theory or auction theory to examine the issue of mining task offloading.
However, they mainly focused on the \emph{non-cooperative} mining scheme, where   miners compete for the block reward, and only one miner can win the reward for each generated block, leading to significant wastage of computing resources among other miners.

To this end, some studies have explored the collaborative block mining scheme \cite{21,22,23,24}.
Collaborative block mining originates from the concept of 'mining pools' in blockchain networks.
It enables miners to aggregate computing resources in a pool (coalition) to collectively mine blocks, thereby enhancing their winning probability.
Lewenberg \emph{et al.} \cite{21} analyzed Bitcoin mining pool dynamics using cooperative game theory, where each miner can choose to join one mining pool.
Similarly, Zhao \emph{et al.} \cite{22} investigated the coalition formation game for miners in an MEC-assisted wireless collaborative blockchain network, addressing edge computing resource allocation, with each miner restricted to a single coalition.
Lajeunesse \emph{et al.} \cite{23} examined cooperative mining issues among mining pools and miners through a two-stage Stackelberg game.
Mai \emph{et al.} \cite{24} investigated the mining pool selection problem, proposing a centralized algorithm based on evolutionary game theory.
\rev{Additionally, Wang \emph{et al.} \cite{add-5} addresses the task offloading problem in mobile blockchain systems for IoT using a mixed model of Stackelberg and coalition formation games, aiming to optimize group partitioning and resource pricing between IoT miners and the computing service provider with low complexity.}
However, these studies assume a \emph{single-coalition} collaboration mode, where each miner can join only one coalition, thereby limiting the potential of collaborative block mining.
In this work, we explore a more comprehensive \emph{multi-coalition} collaboration mode, enabling miners to join multiple coalitions, thereby increasing the potential of collaborative block mining.

\subsection{Overlapping Coalition Game (OCF)}
The coalition game \cite{27} is a concept from game theory \cite{28} used to model cooperation among players.
Different from the traditional coalition game, players in the OCF game can belong to multiple coalitions simultaneously, referred to as the ``overlapping'' coalition structure.
The OCF games have found widespread  applications in wireless communication scenarios \cite{smartphone-sensing,25,26}.
For instance, Di \emph{et al.} \cite{smartphone-sensing} introduced an incentive mechanism based on the OCF game to reward smartphone users participating in sensing tasks.
Chen \emph{et al.} \cite{25} proposed an OCF game-based user grouping scheme aimed at maximizing the system sum rate in non-orthogonal multiple access systems.
Qi \emph{et al.} \cite{26} introduced a sequential OCF game to investigate the resource allocation problem in heterogeneous unmanned aerial vehicle networks.
Given the revenue advantages of the OCF game, we adopt this approach to model the multi-coalition collaboration mode in MEC-assisted blockchain networks.
It is important to note that the game proposed in this work presents significant challenges, due to the ``overlapping'' nature of coalitions as well as the the coupling of the OCF game with the ERC game.
Therefore, traditional methods in \cite{smartphone-sensing,25,26} cannot be directly applied to the proposed game in this work.




\section{System Model and Problem Formulation}\label{SystemModel}

We consider an MEC-assisted blockchain network as shown in Fig.~\ref{fig:system-model}, which consists of a set $\mathcal{N} \triangleq \{ 1, \ldots, N\}$ of $N$ MUs with limited computing resources serving as miners, and one ECP with substantial computing resources to serve MUs.
The ECP deploys edge computing servers on base stations that are close to the MUs.
Assume that the computing resources of edge computing servers are sufficient for a small area that is not crowded with MUs, such as small residential areas and small business areas in practical scenarios.
Each MU collects the available (i.e., unconfirmed) transactions in the blockchain network and purchases edge computing resources from the ECP to perform the computation-intensive mining task of public blockchain.
Here, edge computing resources can be specified as computing power.
When the mining task is finished, the results will be transmitted back to the MUs.
Suppose that MUs form $M$ coalitions to mine block collaboratively, denoted by the set $\mathcal{C} \triangleq \{ \mathcal{C}_{1},\ldots,\mathcal{C}_{M}\}$, where $\mathcal{C}_{m} \subseteq \mathcal{N} $ is a coalition consisting of a specific subset of MUs.
The key notations  are listed in \textbf{TABLE \ref{table1}}.

\begin{table}[t]
	\begin{center}
		\caption{Summary of Key Notations}
		\label{table1}
		\begin{tabular}{ll}
			\toprule 
			\textbf{Symbol} & \textbf{Definition} \\
			\midrule
			$\mathcal{N}$  &  The set of MUs  \\
			$N$ &  The number of MUs  \\
			$\mathcal{C}$ &  The set of coalitions formed by MUs  \\
			$M$ &  The number of coalitions  \\
			$\beta_{n,m}$ &  The binary variable of whether MU $n$ joins coalition $\mathcal{C}_{m}$  \\
			$l_{n,m}$ &  The selected nonce length of MU $n$ in coalition $\mathcal{C}_{m}$  \\
			$l_{m}$ &  The selected nonce length of coalition $\mathcal{C}_{m}$  \\
			$r_{n,m}$ &  The expected reward obtained by MU $n$ in coalition $\mathcal{C}_{m}$  \\
			$r_{m}$ &  The expected reward of coalition $\mathcal{C}_{m}$  \\
			$u_{n,m}$ &  The utility of MU $n$ in coalition $\mathcal{C}_{m}$  \\
			$u_{m}$ &  The utility of coalition $\mathcal{C}_{m}$  \\
			$u_{ECP}$ &  The utility of ECP  \\
			$\mathcal{T}_{n}$ &  The set of available transactions collected by MU $n$  \\
			$\mathcal{T}_{m}$ &  The set of available transactions collected by $\mathcal{C}_{m}$  \\
			$\mathcal{N}_{m}$ &  The set of MUs in coalition $\mathcal{C}_{m}$  \\
			$\pi$ &  The adjustable difficulty parameter  \\
			$\phi$ &  The fixed bits length in hash function  \\
			$\lambda$ &  The mean value of Poisson process for solving PoW  \\
			$T'$ &  The average generation time of each block  \\
			$z$ &  The given network latency factor  \\
			$B$ &  The fixed block reward  \\
			$d$ &  The data size of the block header except for nonce  \\
			$\rho_{m}$ &  The transmission power of coalition $\mathcal{C}_{m}$  \\
			$h_{m}$ &  The channel gain of coalition $\mathcal{C}_{m}$  \\
			$N_{0}$ &  The power of noise in the transmission channel \\
			$W$ &  The bandwidth of transmission   \\
			$f_{E}$ &  The CPU computation frequency of ECP  \\
			$\omega$ &  The CPU cycles for each nonce hash computing  \\
			$p$ &  The unit price for nonce hash computing  \\
			$c$ &  The unit cost for nonce hash computing  \\
			$J$ &  The collaboration factor  \\
			
			\bottomrule
		\end{tabular}
	\end{center}
\end{table}

When some MUs form a coalition, they will combine their transaction data together to create a block and offload the PoW mining tasks to the ECP collaboratively.
Note that if an MU $n\in\N$ adopts a solo mining strategy, it can be considered as a coalition with only one member.
It is important to note that the number of coalitions (i.e., $M$) is \emph{not} pre-determined, but changes dynamically depending on decisions of MUs.
Besides, with the multi-coalition collaboration mode, each MU can join multiple coalitions (e.g., $\rm{MU}_{2}$ and $\rm{MU}_{3}$ in Fig.~\ref{fig:system-model}), which implies that different coalitions may overlap with each other, i.e., $\mathcal{C}_{m_1} \bigcap \mathcal{C}_{m_2} $ may not be empty.

To describe a coalition structure, we introduce a binary variable $\beta_{n,m} \in \{0, 1\}$ to denote whether an MU $n$ joins a coalition $\mathcal{C}_{m}$. That is,  $\beta_{n,m}=1$ if MU $n$ joins coalition $\mathcal{C}_{m}$, and $\beta_{n,m}=0$ otherwise.
Obviously, $\sum_{m=1}^{M}\beta_{n,m} \geq 1$ as each MU $n$ can join multiple coalitions.
For notational convenience, we use ${\bm{\upbeta}}_{m} \triangleq (\beta_{1,m},\ldots,\beta_{N,m})$ to denote the decisions of all MUs with respect to coalition $\mathcal{C}_{m}$,
and use ${\bm{\upbeta}}_{n}\triangleq (\beta_{n,1}, \ldots,\beta_{n,M})^\top$ to denote the decisions of MU $n$ with respect to all coalitions.
Moreover, when MU $n$ joins coalition $\mathcal{C}_{m}$, it will contribute its computing power for collaborative block mining.

\subsection{Mining Model}
Each coalition first needs to pack the collected transactions into blocks during the mining process. Let $\mathcal{T}_{n}\triangleq \{t_{n}^{1}, \ldots, t_{n}^{|\mathcal{T}_{n}|}\}$ denote the set of available transactions collected by MU $n$, and $F_n^i$ denotes the transaction fee of a transaction $t_{n}^{i} \in \mathcal{T}_{n}$.
Then, for each coalition $\mathcal{C}_{m}  $, the set of available transactions collected by all MUs in $\mathcal{C}_{m}$ can be written as
$\mathcal{T}_{m}\triangleq  \{t_{m}^{1}, \ldots, t_{m}^{|\mathcal{T}_{m}|}\} = \mathop{\bigcup}_{n \in \mathcal{N}_{m}} \mathcal{T}_{n}$, where $\mathcal{N}_{m} \triangleq \{n\in \mathcal{N}|~\beta_{n,m}=1\}$ is the set of MUs in coalition $\mathcal{C}_{m}$.
Suppose the transaction set $\mathcal{T}_{m}$ is sorted in descending order of transaction fee, i.e., $F_{m}^{1}\geq \cdots \geq F_{m}^{|\mathcal{T}_{m}|}$.
To maximize the expected reward, the top $I$ transactions with the highest transaction fee will be selected to construct a new block.
Thus, the maximum transaction fee that coalition $\mathcal{C}_{m}$ can obtain is:
\begin{equation}\label{}
	\textstyle
	{F(\bm{\upbeta}_{m})} = \sum\limits_{i=1}^{I}F_{m}^{i}.
\end{equation}

When the block is packed, all transactions will form a Merkle tree \cite{30}, where the hash value of the root node will be recorded in the block header, denoted by $X$.
Such a block header $X$ will be attached with a random number (called \emph{nonce}), and then input into a hash function $\mathcal{H}(\cdot)$ to generate a new hash value.
Each coalition continuously adjusts the nonce to perform repetitive hash computing until the generated hash value satisfies the mining difficulty requirement, such as a specific number of leading zeroes \cite{30}.
The bit length of the nonce, denoted by $\phi$, is determined by the hash function used.
For example, if the hash function is SHA-256 \cite{30}, the bit length of the nonce is $\phi = 32$, and   the search space  is $[0,2^{32} - 1]$.
A coalition mines a block successfully, only if the generated hash value  satisfies the following condition:
\begin{equation}\label{}
	\textstyle
	\mathcal{H}(X||nonce)\leq V(\pi),
\end{equation}
where $||$ is the merge operation, $V(\pi)=2^{\phi - \pi}$ is the mining difficulty requirement, and $\pi$ is the adjustable difficulty parameter set by blockchain system.
When the generated hash value fails to meet the above requirement, the nonce will be increased by $1$ for repetitive hash computing until a suitable nonce is found.
Since the hash computing performed by each nonce is a memoryless experimental process, each nonce hash computing can be regarded as an i.i.d Bernoulli trial with the following success probability:
\begin{equation}\label{}
	\textstyle
	Pr(\mathcal{H}(X||nonce)\leq V(\pi)) = 2^{-\pi}.
\end{equation}

Due to the different transaction information contained in the block of each coalition, the nonce in the block header is not the same for each block.
After mining a block successfully, it will be propagated to other coalitions for verification.
Due to network latency, not all mined blocks will be verified successfully.
When a block propagates more slowly than other mined blocks, it is at risk of becoming an orphan block and being discarded. Assuming the process of block mining follows a Poisson distribution, the probability of orphan blocks generated by $\mathcal{C}_{m}$ can be computed as follows \cite{31}:

\begin{equation}\label{}
	\textstyle
	{P_{m,orp}} = 1-e^{-\lambda zI},
\end{equation}
where $\lambda=1/T'$ is the mean of Poisson process that model the occurrence of solving the PoW puzzle \cite{31}, $T'$ is the average generation time of each block and $z>0$ is a given network latency factor.
\rev{Typically, we can set $T' = 600$s, which is consistent with the practical Bitcoin network \cite{30}.}
Then, the probability of successful verification can be computed by:
\begin{equation}\label{}
	\textstyle
	{P_{m}} =1-P_{m,orp}=e^{-\lambda zI}.
\end{equation}

Then, the mining reward for each coalition is correlated with the selected nonce length \cite{32}, which can be given by:
\begin{equation}\label{}
	\textstyle
	{r_{m}} = \frac{l_{m}}{l_{\mathcal{N}}}(B+F(\bm{\upbeta}_{m}))e^{-\lambda zI}2^{-\pi},
\end{equation}
where $l_{m} = \sum_{n \in \mathcal{N}_{m}} l_{n,m}$ is the selected nonce length of coalition $\mathcal{C}_{m}$, $l_{n,m} \in \mathbb{Z}$ is the selected nonce length of MU $n$ in coalition $\mathcal{C}_{m}$.
Note that the sequence of nonce selected by members of the same coalition does not overlap. $l_{\mathcal{N}}=\sum_{m = 1}^M l_{m}$ is the total selected nonce length of the whole system, and $B$ is the fixed block reward pre-defined by the blockchain system \cite{31}.
Thus, the reward obtained by each MU $n$ in coalition $\mathcal{C}_{m}$ can be calculated by:
\begin{equation}\label{}
	\begin{aligned}
		\textstyle
		{r_{n,m}} &= \frac{l_{n,m}}{l_{m}}r_{m} = \frac{l_{n,m}}{l_{\mathcal{N}}}(B+F(\bm{\upbeta}_{m}))e^{-\lambda zI}2^{-\pi}.
	\end{aligned}
\end{equation}

\subsection{Offloading Model}
If a MU decides to join a coalition, it sends a request to the head miner of the coalition.
Then, the head miner broadcasts the request to all the members of the coalition.
When all members agree, the head miner will allow the MU to join the coalition.
Due to the limited computing power of mobile devices, the PoW mining tasks of each miner are offloaded to the ECP for nonce hash computing.
In this process, miners in each coalition need to transmit their block header data and announce the range of the selected nonce sequence, e.g., $1 \sim 10^4$.
When the ECP successfully receives this information, it will perform hash computing based on the nonce sequence announced by coalitions.
Let $D_{m}$ denote the data size of the block header, and the range of selected nonce sequences as message data is small that can be neglected.
Besides, we consider that the signal interference between coalitions during transmission can be neglected.
\rev{Then, the signal-to-noise ratio (SNR) of each coalition $\mathcal{C}_{m} \in \mathcal{C}$ is given by:
\begin{equation}\label{}
	\textstyle
	{\gamma_{m}({\bm{\upbeta}}_{m})} = \frac{\rho_{m}({\bm{\upbeta}}_{m})h_{m}({\bm{\upbeta}}_{m})}{N_{0}},
\end{equation}
where $\rho_{m}({\bm{\upbeta}}_{m})$ is the average transmission power of miners in coalition $\mathcal{C}_{m}$, $h_{m}({\bm{\upbeta}}_{m})$ is the average channel gain of miners in coalition $\mathcal{C}_{m}$, and $N_{0}$ is the noise power in the transmission channel.
Then, the average transmission rate of each coalition can be given by:
\begin{equation}\label{}
	\textstyle
	{R_{m}({\bm{\upbeta}}_{m})} = W\log(1+\gamma_{m}({\bm{\upbeta}}_{m})),
\end{equation}
}where $W$ is the transmission bandwidth. Therefore, the transmission delay for offloading the PoW mining task to the ECP for each coalition is given by:
\begin{equation}\label{}
	\textstyle
	{T_{m}^{tra}} = \frac{D_{m}}{R_{m}({\bm{\upbeta}}_{m})}.
\end{equation}

Assume that the CPU cycles required for each nonce hash computing is $\omega$, and the CPU computation frequency of the ECP is $f_{E}$ (cycles/s).
Here, the edge servers deployed by the ECP have enough CPU cores to support the mining requirements of all coalitions.
Then, the computing delay of the PoW mining task for each coalition can be calculated as follows:
\begin{equation}\label{}
	\textstyle
	{T_{m}^{com}} = \frac{\omega l_{m}}{f_{E}}.
\end{equation}

Since the resultant data returned by the ECP for nonce hash computing is small, the downlink delay can be neglected in this work.
Finally, the overall mining delay includes both transmission delay and computing delay, is given by:
\begin{align}\label{}
	\textstyle
	{T_{m}} &= T_{m}^{tra} + T_{m}^{com} \nonumber \\
	&= \frac{D_{m}}{W\log(1+\gamma_{m}({\bm{\upbeta}}_{m}))} + \frac{\omega l_{m}}{f_{E}}.
\end{align}

\subsection{Problem Formulation}

Considering the cost of mining, the utility of an MU $n$ in coalition $\mathcal{C}_{m}$ can be defined as follows:
\begin{equation}\label{}
	\textstyle
	{u_{n,m}} = \frac{l_{n,m}}{l_{\mathcal{N}}}(B+F(\bm{\upbeta}_{m}))e^{-\lambda zI}2^{-\pi}-l_{n,m}p,
\end{equation}
where $p$ is the unit price for nonce hash computing set by the ECP.
Then, the utility of a coalition $\mathcal{C}_{m}$, denoted by $u_m$, can be defined as the total utility of all MUs in $\mathcal{C}_{m}$, i.e.,
\begin{align}\label{}
	\textstyle
	{u_m} &=\sum\limits_{n \in \mathcal{N}_m} u_{n,m} \nonumber \\
	&=  \sum\limits_{n \in \mathcal{N}_m} {\bigg(\frac{l_{n,m}}{l_{\mathcal{N}}}(B+F(\bm{\upbeta}_{m}))e^{-\lambda zI}2^{-\pi}-l_{n,m}p\bigg)}.
\end{align}

Besides, the system utility is simply defined as the sum utility of all MUs in all coalitions, i.e., $\sum_{n=1}^N \sum_{m=1}^M u_{n,m} $.
Thus, the utility of the ECP can be defined as follows:
\begin{equation}\label{uECP}
	\textstyle
	{u_{ECP}} = \sum\limits_{n=1}^{N} \sum\limits_{m=1}^{M} l_{n,m} p - \sum\limits_{n=1}^{N} \sum\limits_{m=1}^{M} l_{n,m} c,
\end{equation}
where $c$ is the unit cost for nonce hash computing of the ECP.

For convenience, we use $\bm{\upbeta}\triangleq \{\beta_{n,m},\forall n,m\}$ to denote the  decisions   of all MUs with respect to all coalitions, and use $\mathbf{l}\triangleq \{l_{n,m},\forall n,m\}$ to denote the corresponding selected nonce length allocations of different MUs in different coalitions.
Then, the system utility maximization problem can be formulated as follows:
\begin{align}
	&(\mathrm{P1}):~\mathop {\max }\limits_{\bm{\upbeta},\mathbf{l}} ~{\rm{  }}\sum\limits_{n = 1}^N \sum\limits_{m = 1}^M u_{n,m} + u_{ECP}\\
	&~\mathrm{s.t.}~~{\rm{        }}{\sum\limits_{m = 1}^M} {\beta_{n,m}} \leq J, \quad \forall n\in \mathcal{N}, \\
	&~~~~~~~{\rm{        }} 0 \leq {l_{n,m}} \leq {\beta_{n,m}Q}, \quad \forall n \in \mathcal{N}, \mathcal{C}_m\in \mathcal{C}, {l_{n,m}} \in \mathbb{Z}, \\
	&~~~~~~~{\rm{        }} \frac{D_{m}}{W\log(1+\gamma_{m}({\bm{\upbeta}}_{m}))} + \frac{\omega l_{m}}{f_{E}} \leq T', ~~\forall \mathcal{C}_m\in \mathcal{C}, {l_{m}} \in \mathbb{Z}, \\
	&~~~~~~~{\rm{        }} {\beta_{n,m}} \in \{ 0,1\},\quad \forall n \in \mathcal{N}, \mathcal{C}_m\in \mathcal{C},
\end{align}
where $J$ is the collaboration factor indicating the maximum number of coalitions that each MU can join, and $Q$ is a large positive constant. Constraint (17) ensures that each MU $n$ can join no more than the specified number of coalitions.
Constraint (18) ensures that each MU $n$ cannot allocate the selected nonce length to perform the PoW mining task in a coalition $\mathcal{C}_{m}$ that it didn't join, that is, $l_{n,m}=0$ if  $\beta_{n,m}=0$.
Constraint (19) ensures that the overall mining delay of each coalition cannot exceed the average block generation time.
Otherwise, it fails to execute the PoW mining task.

Clearly, the system utility maximization problem (P1) is a centralized optimization problem, which requires  a central controller to make decisions for all MUs in a centralized manner.
However, both MUs and the ECP are independent and selfish, and their objectives are to maximize their own utility, rather than the system utility.
Thus, the centralized solution given in problem (P1) may be inapplicable in practice.
This motivates us to analyze problem (P1) from a game-theoretic perspective.




\section{Game Formulation and Analysis}\label{problemFormulation}

In this section, we propose a two-stage Stackelberg game shown in Fig. \ref{fig2} to characterize and analyze the behaviors of MUs and the ECP, where the ECP is the leader and MUs are the followers.
Specifically, in Stage I, the ECP, acting as the leader, determines the prices of computing resources for all MUs.
In Stage II, each MU first selects one or multiple coalitions to join, resulting in an OCF (overlapping coalition formation) game, and subsequently, each coalition decides the amount of edge computing resources to purchase from the ECP, leading to an ERC (edge resource competition) game.
In what follows, we will first formulate the two-stage Stackelberg game, and then analyze the game equilibrium systematically using backward induction.


\begin{figure}[t]
	\centering
	\includegraphics[width=3.1in]{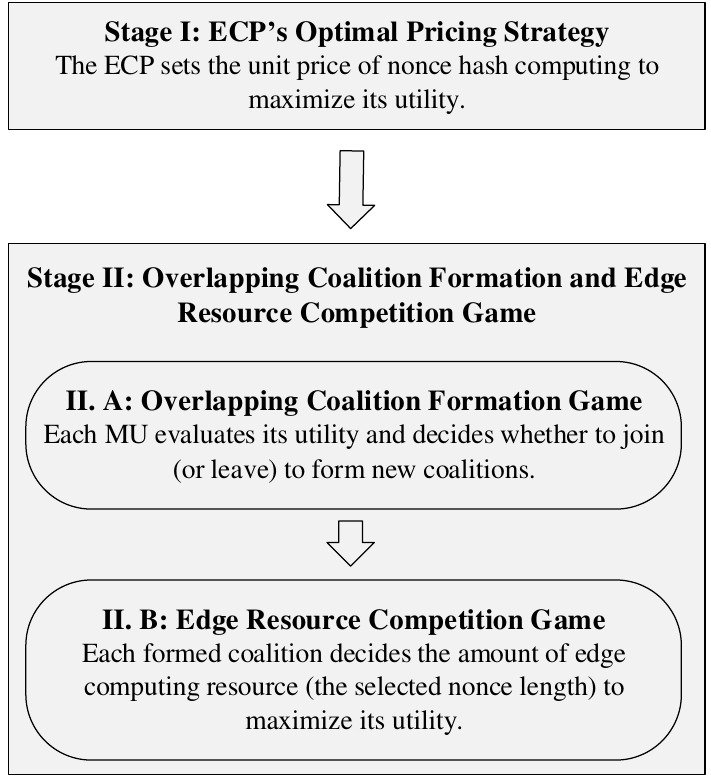}
	\caption{The Proposed Two-Stage Stackelberg Game.}
	\label{fig2}
	\vspace{-3mm}
\end{figure}

\subsection{Two-Stage Stackelberg Game Formulation}

\subsubsection{Stage II: Overlapping Coalition Formation and Edge Resource Competition Game}
The aim of each MU is to decide on coalition joining (or leaving) and edge computing resource investment to maximize its own utility, which can be given as follows:
\begin{align}
	&(\mathrm{P2}):\mathop {\max}\limits_{{\bm{\upbeta}}_{n}, \mathbf{l}_{n}}~~ \sum\limits_{m = 1}^M u_{n,m}
	\\
	&~~~~~~~~~~\mathrm{s.t.}~~{\rm{        }} (17), (18), (19), (20), \nonumber
\end{align}
where $\mathbf{l}_{n} \triangleq \{l_{n,m}, \forall \mathcal{C}_m\in \mathcal{C} \}$ denotes the selected nonce length of MU $n$ in the joined coalitions.

It can be seen that $(\mathrm{P2})$ is intractable due to the coupling of $\bm{\upbeta}_{n}$ and $\mathbf{l}_{n}$.
Therefore, we transform $(\mathrm{P2})$ into a two-layer sequential game to analyze the collaborative and competitive behaviors of MUs, which includes Stage II.A and Stage II.B.
In Stage II.A, each MU decides whether to join (or leave) an existing coalition $\bm{\upbeta}_{n}$ to from new coalitions based on its utility change (see Section \ref{OCF}).
In Stage II.B, each formed coalition from Stage II. A decides the amount of edge computing resource $\mathbf{l}_{m}$ to maximize its utility, where $\mathbf{l}_{m}\triangleq \{l_{n,m}, \forall n\in \mathcal{N}_m  \}$ denotes the selected nonce length allocations of all MUs in coalition $\mathcal{C}_{m}$ (see Section \ref{ERC}).

\subsubsection{Stage I:}
The aim of the ECP is to set the unit price $p$ for nonce hash computing to maximize its utility, which can be given by
\begin{align}\label{uECP_problem}
	&(\mathrm{P3}):~\mathop {\max }\limits_{p} ~{\rm{  }}u_{ECP} \\
	&~~~~~~~~~~~\mathrm{s.t.}~~{\rm{        }} 0 \leq {p} \leq \bar{p},
\end{align}
\rev{where $\bar{p}$ is the maximum price.
Constraint (23) ensures that the set price cannot exceed $\bar{p}$.
Empirically, we set $\bar{p}=500$ in our simulation.}
Problem $(\mathrm{P2})$ and $(\mathrm{P3})$ form a two-stage Stackelberg game, and our goal is to find the Stackelberg equilibrium (SE) for this game.
Consequently, the SE of this game is defined as follows.

\begin{definition}(Stackelberg Equilibrium)
	Let $p^*$ denote the optimal price in Stage I.
	A strategy profile $(\mathbf{l}^{NE}, \bm{\upbeta}^{NE})$ is a Nash Equilibrium (NE) in Stage II, where $\mathbf{l}^{NE} \triangleq \{\mathbf{l}_{1}^{NE}, \mathbf{l}_{2}^{NE},\cdots,\mathbf{l}_{N}^{NE}\}$, and $\bm{\upbeta}^{NE} \triangleq \{\bm{\upbeta}_{1}^{NE}, \bm{\upbeta}_{2}^{NE},\cdots,\bm{\upbeta}_{N}^{NE}\}$
	Then, the point $(p^*, \mathbf{l}^{NE}, \bm{\upbeta}^{NE})$ is the SE, if and only if for any MU $n \in \mathcal{N}$ and the ECP,
	\begin{align}\label{}
		\textstyle
		{u_{n}(\mathbf{l}_{n}^{NE}, \mathbf{l}_{-n}^{NE}, \bm{\upbeta}_{n}^{NE}, \bm{\upbeta}_{-n}^{NE}, p^*)} &\geq {u_{n}(\mathbf{l}_{n}, \mathbf{l}_{-n}^{NE}, \bm{\upbeta}_{n}, \bm{\upbeta}_{-n}^{NE}, p^*)}, \nonumber \\  &\forall \mathbf{l}_{n} \neq \mathbf{l}_{n}^{NE}, \bm{\upbeta}_{n} \neq \bm{\upbeta}_{n}^{NE},
	\end{align}
	and
	\begin{align}\label{}
		\textstyle
		{u_{ECP}(p^*, \mathbf{l}^{NE}(p^*), \bm{\upbeta}^{NE}(p^*))} \geq  &{u_{ECP}(p, \mathbf{l}^{NE}(p), \bm{\upbeta}^{NE}(p))}, \nonumber \\ &\forall p \neq p^*,
	\end{align}
	where $\mathbf{l}_{-n}^{NE} \triangleq \{\mathbf{l}_{1}^{NE}, \cdots, \mathbf{l}_{n-1}^{NE}, \mathbf{l}_{n+1}^{NE}, \cdots, \mathbf{l}_{N}^{NE}\}$, and $\bm{\upbeta}_{-n}^{NE} \triangleq \{\bm{\upbeta}_{1}^{NE}, \cdots, \bm{\upbeta}_{n-1}^{NE}, \bm{\upbeta}_{n+1}^{NE}, \cdots, \bm{\upbeta}_{N}^{NE}\}$.
\end{definition}

In the following, we will analyze the proposed game equilibrium.
First, we analyze Stage II.B in Section \ref{ERC}.
Then, we analyze Stage II.A in Section \ref{OCF}.
Finally, we analyze Stage I in Section \ref{ECP_opt}.

\subsection{The ERC Game Analysis in Stage II.B}\label{ERC}

We first analyze the NE of the ERC game in Stage II.B, given the coalition structure formed by the OCF game in Stage II.A.
Specifically, given the decisions of all MUs with respect to all coalitions $\bm{\upbeta}$, it forms a particular coalition structure $\mathcal{C} \triangleq  \{\mathcal{C}_1, \cdots, \mathcal{C}_M\}$ in II.A.
The objective of each coalition $C_{m} \in \mathcal{C}$ is to maximize the total utility $u_m$ constituted by the MUs it has joined. That is,
\begin{align}
	&(\mathrm{P4}):\mathop {\max}\limits_{\mathbf{l}_{m}}~~ u_{m}
	\\
	&~~~~~~~~~~\mathrm{s.t.}~~{\rm{        }} (19), \nonumber \\
	&~~~~~~~~~~~~~~~~{\rm{        }}~ {l_{n,m}} \geq 0, ~~ \forall n \in \mathcal{N}_m.
\end{align}

It is important to note that the above optimization problem $(\mathrm{P4})$ depends only on the total selected nonce length of coalition $\mathcal{C}_m$, i.e., $l_m = \sum_{n\in \mathcal{N}_m} l_{n,m}$, independent of the detailed selected nonce length allocation among the MUs in $\mathcal{N}_m$.
Thus, we can transform problem $(\mathrm{P4})$ into $(\mathrm{P4'})$, where each coalition $\mathcal{C}_m$ aims to maximize its utility $u_m$ by deciding the total selected nonce length $l_m$. That is,
\begin{align}
	&(\mathrm{P4'}):~\mathop {\max }\limits_{l_{m}} ~{\rm{  }}u_{m} \\
	&~~~~~~~~~~~\mathrm{s.t.}~~{\rm{        }} (19), \nonumber  \\
	&~~~~~~~~~~~~~~~~~{\rm{        }}~ {l_{m}} \geq 0.
\end{align}

Therefore, each coalition $C_{m} \in \mathcal{C}$ competes with each other for the mining reward by deciding the total selected nonce length $l_m$ for MUs in the coalition.
Such a process can be characterized by the following non-cooperative ERC game.~~~~~~~~~~

\begin{definition}
	The ERC game, denoted by $\Omega \triangleq \{\mathcal{C}, \mathbf{l}, \mathbf{U}\}$, is a non-cooperative game defined as follows:
	\begin{itemize}
		\item Game Player: all coalitions in $\mathcal{C} \triangleq \{ C_{1}, \ldots,C_{M}\}$.
		\item Strategy of each coalition $C_{m} \in \mathcal{C}$  is the total selected nonce length $l_{m}$.
		The strategy profile of all coalitions is denoted by $\mathbf{l} \triangleq \{l_{1}, \ldots,l_{M}\}$.
		\item Payoff of each coalition $C_{m} \in \mathcal{C}$ is its utility:
		\begin{equation}\label{um_lm}
			\textstyle
			{u_{m}(l_{m},\boldsymbol l_{-m})} = \frac{l_{m}}{l_{m} + \sum l_{-m}}(B+F(\bm{\upbeta}_{m}))e^{-\lambda zI}2^{-\pi}-l_{m}p,
		\end{equation}
		where $\boldsymbol l_{-m} \triangleq \{l_{1}, \ldots, l_{m-1}, l_{m+1}, \ldots, l_{M}\}$ and $\sum l_{-m} = \sum_{j = 1, j \neq m}^{M} l_{j}$.
		The utility profile of all coalitions is denoted by $\mathbf{U} \triangleq \{u_{1},  \ldots,u_{M}\}$.
	\end{itemize}
\end{definition}

Since $l_{m}$ is a non-negative integer variable, problem $(\mathrm{P4})$ is an integer programming problem, which is intractable.
To achieve efficient processing, we first perform a continuous relaxation of the target variable $l_{m}$. The existence of NE is given in the following theorem.

\begin{theorem}
	There exists an NE in the ERC game $\Omega \triangleq \{\mathcal{C}, \mathbf{l}, \mathbf{U}\}$.
\end{theorem}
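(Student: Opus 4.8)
The plan is to invoke the standard Debreu--Glicksberg--Fan existence theorem for concave games: if every player's strategy set is a nonempty, convex, and compact subset of a Euclidean space, and every payoff is continuous in the joint strategy profile and concave in the player's own strategy, then a pure-strategy NE exists. First I would pin down the strategy set. After the continuous relaxation, the feasible region for $l_m$ is determined by the delay constraint (19): rearranging $\frac{D_m}{W\log(1+\gamma_m(\bm{\upbeta}_m))} + \frac{\omega l_m}{f_E} \le T'$ gives $0 \le l_m \le \bar l_m$, where $\bar l_m \triangleq \frac{f_E}{\omega}\big(T' - \frac{D_m}{W\log(1+\gamma_m(\bm{\upbeta}_m))}\big)$. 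Hence each coalition's strategy set $\mathcal{S}_m = [0,\bar l_m]$ is a nonempty, convex, compact interval, as required.

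Next I would verify concavity of the payoff in the coalition's own variable. Writing $R_m \triangleq (B+F(\bm{\upbeta}_m))e^{-\lambda z I}2^{-\pi} > 0$ and $S_{-m} \triangleq \sum l_{-m} \ge 0$, the payoff (\ref{um_lm}) reads $u_m = \frac{l_m}{l_m + S_{-m}}R_m - l_m p$. Differentiating twice with respect to $l_m$ yields $\frac{\partial^2 u_m}{\partial l_m^2} = -\frac{2R_m S_{-m}}{(l_m+S_{-m})^3} \le 0$ whenever $S_{-m}>0$, while $u_m$ reduces to the affine (hence concave) map $R_m - l_m p$ when $S_{-m}=0$. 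Thus $u_m$ is concave --- indeed strictly concave when $S_{-m}>0$ --- in $l_m$, which supplies the concavity hypothesis of the theorem.

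I expect the continuity requirement to be the main obstacle. Away from the all-zero profile $\mathbf{l}=\mathbf{0}$, $u_m$ is a ratio of smooth functions with nonvanishing denominator and is therefore jointly continuous; the difficulty is concentrated at the single point $\mathbf{l}=\mathbf{0}$, where the contest share $\frac{l_m}{l_m+S_{-m}}$ takes the indeterminate form $0/0$ --- the well-known discontinuity of Tullock-type contests. To handle this rigorously I would regularize: for small $\boldsymbol\epsilon=(\epsilon_1,\ldots,\epsilon_M)$ with $\epsilon_m>0$, restrict each coalition to $\mathcal{S}_m^{\boldsymbol\epsilon}=[\epsilon_m,\bar l_m]$. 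On $\prod_m \mathcal{S}_m^{\boldsymbol\epsilon}$ the denominator is bounded below by $\sum_j \epsilon_j>0$, so every $u_m$ is jointly continuous and concave in $l_m$ over nonempty convex compact sets, and Debreu--Glicksberg--Fan yields an NE $\mathbf{l}^{\boldsymbol\epsilon}$. Letting $\boldsymbol\epsilon\to\mathbf 0$ and extracting a convergent subsequence by compactness of $\prod_m[0,\bar l_m]$, I would pass the equilibrium inequalities to the limit, using that the limit point cannot be $\mathbf 0$ --- since at $\mathbf 0$ any coalition could secure a strictly positive payoff $R_m-\ell p>0$ for small $\ell>0$, violating the equilibrium condition --- to conclude that the limit is an NE of the original ERC game.

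Alternatively, and more in keeping with the subsequent KKT-based derivation, I would bypass the fixed-point machinery by exploiting concavity to write each best response in closed form. Setting $\partial u_m/\partial l_m = 0$ gives the interior candidate $l_m^\star(S_{-m}) = \sqrt{S_{-m}R_m/p}-S_{-m}$, which after clamping to $[0,\bar l_m]$ defines a continuous best-response map from the compact convex set $\prod_m[0,\bar l_m]$ to itself; Brouwer's fixed-point theorem then produces an NE directly. This route has the advantage of immediately setting up the closed-form equilibrium characterization that follows.
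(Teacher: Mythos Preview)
Your main argument is essentially the paper's own proof: verify that the relaxed strategy set $[0,\bar l_m]$ is a nonempty convex compact interval, compute $\partial^2 u_m/\partial l_m^2 = -2R_mS_{-m}/(l_m+S_{-m})^3 \le 0$ to obtain concavity in the own variable, and then invoke the standard concave-game existence theorem (the paper cites Theorem~3.2 in \cite{33}). The paper's proof consists of exactly these two observations and nothing more.

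Where you differ is in rigor rather than route. The paper simply asserts that $u_m$ is continuous and moves on; you correctly flag the Tullock-type indeterminacy at $\mathbf{l}=\mathbf{0}$ and propose an $\boldsymbol\epsilon$-truncation plus limiting argument to deal with it. That repair is sound in spirit, and your alternative Brouwer route via the clamped closed-form best response $l_m^\star = \big[\sqrt{S_{-m}R_m/p}-S_{-m}\big]_0^{\bar l_m}$ is also valid and dovetails nicely with the KKT computation the paper uses immediately afterward in Theorem~2. Either refinement is more careful than what the paper actually supplies, but the underlying strategy is the same.
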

\begin{proof}
	See Appendix \ref{Appdendix_A}.
\end{proof}

To obtain the NE of the ERC game, we first derive the best response strategy of each coalition $\mathcal{C}_m$ based on KKT conditions, denoted by $l_m^*$, that is,
\begin{equation}\label{best response}
	\textstyle
	l_m^* = b_m(\boldsymbol l_{-m}) = \left[ \sqrt{\frac{(B+F(\bm{\upbeta}_{m}))e^{-\lambda zI}2^{-\pi} \sum l_{-m}}{p}} - \sum l_{-m} \right]_{0}^{\bar l_m},
\end{equation}
where
\begin{equation}\label{}
	\textstyle
	\bar l_m = \frac{f_E \left( T' W\log(1+\gamma_{m}) - D_{m} \right) }{\omega W\log(1+\gamma_{m})},
\end{equation}
\rev{where $\bar{l}_{m}$ denotes the maximum of the selected nonce length for each coalition.}
By solving the above best response strategies for all coalitions jointly, we can obtain the following properties of NE.
\begin{theorem}
	If $\mathbf{l}^{NE}  $ is the NE of the ERC game, there exists the following conditions hold:
	
	1) For coalition $C_{m}$ with $l_m^* \in [0, \bar l_m]$,
	\begin{equation}\label{NE}
		\textstyle
		\begin{split}
			l_{m}^{NE} = \left\{
			\begin{aligned}
				&\lceil l_{m}^{+} \rceil, ~~~u_{m}(\lceil l_{m}^{+} \rceil) \geq u_{m}(\lfloor l_{m}^{+} \rfloor),\\
				&\lfloor l_{m}^{+} \rfloor, ~~~u_{m}(\lceil l_{m}^{+} \rceil) < u_{m}(\lfloor l_{m}^{+} \rfloor),
			\end{aligned}
			\right.
		\end{split}	
	\end{equation}
	where
	\begin{equation}\label{NE-con}
		l_{m}^{+} =  \frac{M-1}{p \sum_{j=1}^{M} \frac{1}{\theta_{j}}}\cdot  (1 - \frac{M-1}{\theta_{m} \sum_{j=1}^{M} \frac{1}{\theta_{j}}})  , ~~ \forall \mathcal{C}_{m} \in \mathcal{C},
	\end{equation}	
\rev{where $\theta_{m} = (B+F(\bm{\upbeta}_{m}))e^{-\lambda zI}2^{-\pi}$ denotes the expected mining reward of each coalition,} $\lceil z \rceil$ and $\lfloor z \rfloor$ denote that $z$ rounded up and down greedily to the integer solution, respectively.
	
	2) For coalition $C_{m}$ with $l_m^* < 0$, $l_{m}^{NE} = 0$.
	
	3) For coalition $C_{m}$ with $l_m^* > \bar l_m $, $l_{m}^{NE} = \bar l_m$.
\end{theorem}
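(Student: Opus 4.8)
The plan is to characterize the NE by solving the system of best-response conditions (36) simultaneously across all coalitions, and then to account separately for the integrality of $l_m$ and for the boundary projection $[\,\cdot\,]_0^{\bar{l}_m}$. Since Theorem~1 already guarantees existence, the task here is purely to pin down the value. First I would restrict attention to the \emph{interior} regime, where every coalition's strategy lies strictly inside $(0,\bar{l}_m)$ so that the projection brackets in (36) are inactive. In this regime the best response collapses to the stationarity condition $l_m + \sum l_{-m} = \sqrt{\theta_m \sum l_{-m}/p}$. Introducing the aggregate nonce length $L \triangleq \sum_{j=1}^M l_j$ and using $\sum l_{-m} = L - l_m$, the left-hand side is simply $L$, so squaring gives $L^2 = \theta_m (L - l_m)/p$, which rearranges to the compact per-coalition relation
\begin{equation}\label{plan-relation}
	\textstyle
	l_m = L\left(1 - \tfrac{pL}{\theta_m}\right), \quad \forall \mathcal{C}_m \in \mathcal{C}.
\end{equation}

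Next I would sum (\ref{plan-relation}) over all $m$ and exploit $\sum_m l_m = L$. This yields $L = ML - pL^2 \sum_{j=1}^M \tfrac{1}{\theta_j}$, and dividing through by $L>0$ gives the scalar identity $1 = M - pL\sum_{j=1}^M \tfrac{1}{\theta_j}$, hence $L = \frac{M-1}{p\sum_{j=1}^M 1/\theta_j}$. Back-substituting this closed form for $L$ into (\ref{plan-relation}) and using $pL = \frac{M-1}{\sum_j 1/\theta_j}$ reproduces exactly the expression $l_m^{+}$ in (37). This establishes the continuous-relaxation maximizer; to recover case~1) I would then invoke the strict concavity of $u_m$ in $l_m$ (the second derivative is $-2\theta_m \sum l_{-m}/(l_m+\sum l_{-m})^3 < 0$, as used in the proof of Theorem~1). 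Strict concavity on $\mathbb{R}_{\geq 0}$ forces the best \emph{integer} response to be attained at one of the two integers flanking $l_m^{+}$, so comparing $u_m(\lceil l_m^{+}\rceil)$ with $u_m(\lfloor l_m^{+}\rfloor)$ and keeping the larger gives the rounding rule in (34).

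Cases 2) and 3) I would read off directly from the projection in (36). Because $u_m(\cdot,\boldsymbol{l}_{-m})$ is single-peaked, if the unconstrained stationary point $l_m^*$ is negative then $u_m$ is strictly decreasing throughout the feasible set $[0,\bar{l}_m]$, so the optimum is the lower endpoint $l_m^{NE}=0$; symmetrically, if $l_m^*>\bar{l}_m$ then $u_m$ is strictly increasing on $[0,\bar{l}_m]$ and the delay constraint (19) binds, giving $l_m^{NE}=\bar{l}_m$. Both conclusions are just the clamping operation $[\,\cdot\,]_0^{\bar{l}_m}$ evaluated at a fixed point.

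The main obstacle I anticipate is the coupling between the projection and the simultaneous fixed-point solution: once some coalitions are clamped to $0$ or $\bar{l}_m$, they still contribute to the aggregate $L$ seen by the remaining coalitions, so the clean closed form $l_m^{+}$ genuinely describes the fixed point only of the \emph{reduced} subgame among the unclamped coalitions rather than of all $M$ coalitions at once. To make the characterization self-consistent I would argue that one should first identify the active (interior) set, solve the reduced aggregate identity over that set, and then verify mutual best response, i.e., that no clamped coalition strictly gains by moving inward and no interior coalition's optimizer actually violates its bounds. Handling this active-set consistency carefully, rather than the algebra of (\ref{plan-relation}), is where the real work lies.
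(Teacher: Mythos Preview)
Your proposal is correct and follows essentially the same route as the paper: introduce the aggregate $L=\sum_j l_j$ (the paper calls it $\psi$), rewrite each interior best-response condition as $l_m = L - pL^2/\theta_m$, sum over $m$ to solve for $L$, back-substitute to obtain $l_m^+$, and then handle the integer rounding by comparing $u_m$ at $\lfloor l_m^+\rfloor$ and $\lceil l_m^+\rceil$, with the boundary cases read off from the projection. Your treatment is in fact slightly more careful than the paper's, since you explicitly invoke strict concavity to justify the two-point comparison and you flag the active-set consistency issue, which the paper's proof does not address.
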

\begin{proof}
	See Appendix \ref{Appdendix_B}.
\end{proof}

\begin{figure*}[ht]
	\centering
	\includegraphics[width=7.1in]{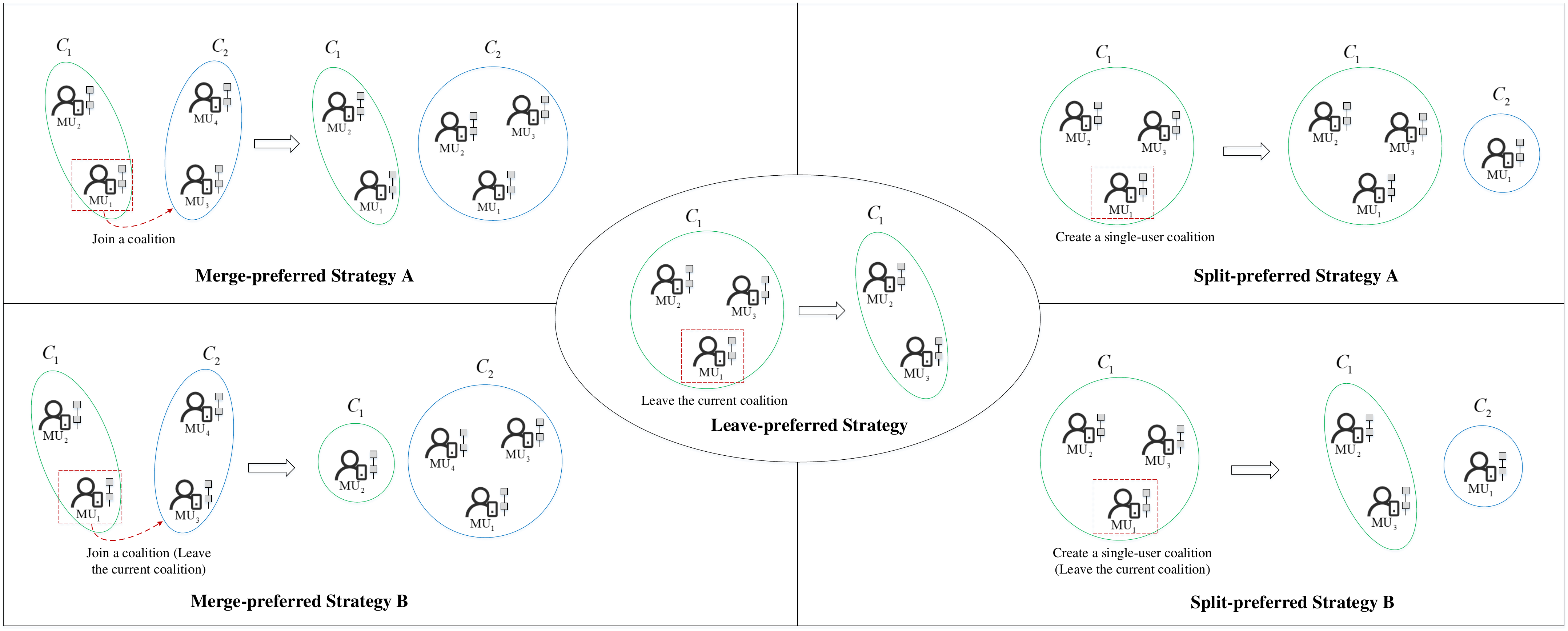}
	\caption{Examples of MU's Atomic Strategies.}
	\label{fig3}
	\vspace{0mm}
\end{figure*}

\begin{lemma}
\rev{The gap bound between $u_{m}(l_{m}^{NE})$ and $u_{m}(l_{m}^{+})$ is as follows.
	\begin{equation}\label{}
		\textstyle
		|u_{m}(l_{m}^{+}) - u_{m}(l_{m}^{NE})| \leq G,
	\end{equation}
	where $G = \max_{l_{m}}|\nabla u_{m}(l_{m})|$.
	}
	
\end{lemma}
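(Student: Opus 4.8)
The plan is to treat this as a rounding-error bound obtained via the Mean Value Theorem. The essential observation is that, by Theorem 2, the integer equilibrium $l_m^{NE}$ is chosen to be either $\lceil l_m^+ \rceil$ or $\lfloor l_m^+ \rfloor$; since the floor and ceiling of $l_m^+$ bracket it within a single unit interval, in either case we have $|l_m^{NE} - l_m^+| \leq 1$. Thus the relaxed optimum $l_m^+$ and the integer equilibrium $l_m^{NE}$ differ by at most one in the strategy space, and the whole task reduces to controlling how much $u_m$ can change over an interval of length at most one.

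First I would fix the competing coalitions' equilibrium strategies $\boldsymbol l_{-m}^{NE}$ and regard $u_m$ from (\ref{um_lm}) as a single-variable function of $l_m$,
\begin{equation*}
	\textstyle
	u_m(l_m) = \frac{l_m}{l_m + \sum l_{-m}}\,\theta_m - l_m p,
\end{equation*}
where $\theta_m = (B+F(\bm{\upbeta}_m))e^{-\lambda zI}2^{-\pi}$. On the feasible range $l_m \in [0,\bar l_m]$ this function is continuously differentiable, since the denominator $l_m + \sum l_{-m}$ stays strictly positive whenever at least one coalition invests a positive nonce length, with derivative $\nabla u_m(l_m) = \frac{\sum l_{-m}}{(l_m + \sum l_{-m})^2}\theta_m - p$. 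Hence the Mean Value Theorem applies on the segment joining $l_m^+$ and $l_m^{NE}$.

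Then I would invoke the Mean Value Theorem to obtain some $\xi$ between $l_m^+$ and $l_m^{NE}$ with $u_m(l_m^+) - u_m(l_m^{NE}) = \nabla u_m(\xi)\,(l_m^+ - l_m^{NE})$. Taking absolute values and combining with the unit-gap bound $|l_m^+ - l_m^{NE}| \leq 1$ gives
\begin{equation*}
	\textstyle
	|u_m(l_m^+) - u_m(l_m^{NE})| = |\nabla u_m(\xi)|\,|l_m^+ - l_m^{NE}| \leq |\nabla u_m(\xi)| \leq \max_{l_m} |\nabla u_m(l_m)| = G,
\end{equation*}
which is exactly the claimed inequality.

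I do not expect a serious obstacle here; the argument is essentially a one-dimensional Lipschitz estimate. The only points requiring care are (i) making precise that $\nabla u_m$ denotes the partial derivative $\partial u_m / \partial l_m$ with the rivals' equilibrium strategies held fixed, and (ii) confirming that the supremum defining $G$ is taken over the same feasible interval $[0,\bar l_m]$, so that the intermediate point $\xi$ indeed lies in the domain over which the maximum is evaluated; both are immediate once the single-variable viewpoint of the previous step is adopted.
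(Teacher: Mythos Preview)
Your proposal is correct and follows essentially the same approach as the paper's own proof: both apply the (Lagrange) Mean Value Theorem to the single-variable function $u_m(l_m)$, use the rounding construction from Theorem~2 to get $|l_m^{+}-l_m^{NE}|\leq 1$, and then bound $|\nabla u_m|$ by its maximum $G$ over the feasible interval. Your write-up is in fact slightly more explicit about the differentiability and the reason for the unit gap, but the argument is the same.
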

\begin{proof}
	See Appendix \ref{Appdendix_C}.
\end{proof}

Note that the NE given in (\ref{NE}) specifies the total selected nonce length of each coalition, but does not determine the detailed allocation of selected nonce length among MUs within the coalition.
As mentioned before, the detailed allocation of selected nonce length (among MUs within a coalition) does not affect the coalition utility or best response strategy.
Therefore, in this work, we adopt a simple \emph{uniform allocation}, where the total selected nonce length of a coalition is evenly divided among all MUs in the coalition, i.e., $ l_{n,m}^{NE} = l_{m}^{NE} / N_{m} $.
Similarly, the utility of a coalition is evenly divided among all MUs in the coalition, i.e., $ u_{n,m}^{NE} =u_{m}^{NE} / N_{m} $.


\subsection{The OCF Game Analysis in Stage II.A}\label{OCF}

Now, we analyze the OCF game in Stage II.A.

\subsubsection{OCF Game Definition}

In a classical OCF game, the players collaborate with each other to form coalitions to maximize their individual utilities, where each player is capable of joining multiple coalitions.
Each player decides whether to join a single coalition or multiple coalitions to maximize its total utility.

The OCF game in our model can be defined as follows.
\begin{definition}
	The OCF game is defined as $\mathcal{G} \triangleq (\mathcal{N},v,\mathcal{C})$, where game players are all MUs in $\mathcal{N}$, $v$ is the characteristic function,
	and $\mathcal{C} \triangleq \{ \mathcal{C}_{1},\ldots,\mathcal{C}_{M}\}$ is the coalition structure generated by the OCF game.
\end{definition}

Note that the classical non-overlapping coalition game (where each player can only join one coalition) can be regarded as a special case of the OCF game defined above, by simply letting $ \mathcal{C}_{m}\bigcap \mathcal{C}_{m'}=\emptyset, \forall \mathcal{C}_{m} \neq \mathcal{C}_{m'} \in \mathcal{C} $.
Accordingly, the single-coalition collaboration mode is considered to be a special case of the multi-coalition collaboration mode.


The characteristic function $v$ in the OCF game is the utility of game players (i.e., MUs), which is given by:
\begin{equation}\label{equation:characteristic function}
	\textstyle
	\begin{split}
		v_{n,m} = \left\{
		\begin{aligned}
			&u_{n,m}, ~\mbox{if (17), (18) and (19) are satisfied},\\
			&0, \quad \mbox{otherwise}.
		\end{aligned}
		\right.
	\end{split}
\end{equation}

More specifically, if constraints (17), (18) and (19) are satisfied, the characteristic function of MU $n$ denotes the utility received from a coalition $\mathcal{C}_{m}$.
Moreover, if MU $n$ joins multiple coalitions under a coalition structure $\mathcal{C}$, the utility of MU $n$, denoted by $\xi_{n}(\mathcal{C})$, is the total utility received from all coalitions it joins.
\begin{equation}\label{equation:sum utility}
	\textstyle
	\xi_{n}(\mathcal{C}) = \sum\limits_{m = 1}^{M} v_{n,m}, \quad \forall \mathcal{C}_{m} \in \mathcal{C}.
\end{equation}


Next, we solve for the stable coalition structure, under which no MU has an incentive to deviate from their current strategy.

\subsubsection{MU's Atomic Strategies}

For any coalition structure $\mathcal{C}$, each MU $n$ has five \emph{atomic strategies}:

(i) Deciding whether to join a new coalition $\mathcal{C}_m$ (where $n \notin \mathcal{N}_m$), thereby forming a new coalition $\mathcal{C}_{m'} = \mathcal{C}_m \bigcup \{n\}$ and achieving a new coalition structure $\mathcal{C}'=\mathcal{C} \setminus \{\mathcal{C}_m\} \bigcup \{\mathcal{C}_{m'}\}$;

(ii) Deciding whether to leave an existing coalition $\mathcal{C}_m$ (where $n \in \mathcal{N}_m$) to join a new coalition $\mathcal{C}_{m'}$, thereby forming two new coalition, i.e., $\mathcal{C}_{m''} = \mathcal{C}_m \setminus \{n\}$ and $\mathcal{C}_{m'''} = \mathcal{C}_{m'} \bigcup \{n\}$, and achieving a new coalition structure $\mathcal{C}'=\mathcal{C} \setminus \{\mathcal{C}_{m}, \mathcal{C}_{m'}\} \bigcup \{\mathcal{C}_{m''}, \mathcal{C}_{m'''}\}$;

(iii) Deciding whether to create a new  single-user coalition $\mathcal{C}_{m'} = \{n\}$, thereby achieving a new coalition structure $\mathcal{C}'=\mathcal{C} \bigcup \{\mathcal{C}_{m'}\}$;

(iv) Deciding whether to leave an existing coalition $\mathcal{C}_m$ (where $n \in \mathcal{N}_m$), thereby creating a new coalition $\mathcal{C}_{m'} = \mathcal{C}_m \setminus \{n\}$ and a single-user coalition $\mathcal{C}_{m''} = \{n\}$, and achieving a new coalition structure $\mathcal{C}'=\mathcal{C} \setminus \{\mathcal{C}_m\} \bigcup \{\mathcal{C}_{m'}, \mathcal{C}_{m''}\}$;

(v) Deciding whether to leave an existing coalition $\mathcal{C}_m$ (where $n \in \mathcal{N}_m$), thereby forming a new coalition $\mathcal{C}_{m'} = \mathcal{C}_m \setminus \{n\}$ and achieving a new coalition structure $\mathcal{C}'=\mathcal{C} \setminus \{\mathcal{C}_m\} \bigcup \{\mathcal{C}_{m'}\}$.

It is important to note that if an MU $n$ intends to join a new coalition $\mathcal{C}_m$, the utilities of MUs in $\mathcal{C}_m$ must not be negatively affected, or else MU $n$'s proposal will be rejected.
However, if an MU $n$ desires to leave an existing coalition $\mathcal{C}_m$, it can leave immediately without the consent of other MUs in $\mathcal{C}_m$.
When the coalition structure converges to $\mathcal{C}^{S}$, each MU has no incentive to deviate their strategies to change the coalition structure. In such a case, the coalition structure $\mathcal{C}^{S}$ is \textit{individually stable} \cite{34}.

Formally, we have the following conditions for the above five atomic strategies.

\begin{definition}(Merge-preferred Strategy A) \label{definition5}
	Given any coalition structure $\mathcal{C}$, if an MU $n$ chooses to join a coalition $\mathcal{C}_{m}$ (without leaving the current coalition) to construct a new coalition structure $\mathcal{C}'$, the following conditions must be satisfied:
	(i) the utility of MU $n$ cannot be decreased, i.e. $\xi_{n}(\mathcal{C'}) \geq \xi_{n}(\mathcal{C})$,
	(ii) the utility of MUs in $\mathcal{C}_{m}$ cannot be decreased, i.e., $\xi_{n'}(\mathcal{C'}) \geq \xi_{n'}(\mathcal{C})$, $\forall n'\in \mathcal{N}_{m}$.
\end{definition}

\begin{definition}(Merge-preferred Strategy B) \label{definition6}
	Given any coalition structure $\mathcal{C}$, if an MU $n$ chooses to leave a coalition $\mathcal{C}_{m}$ to join a coalition $\mathcal{C}_{m'}$(where $n \notin \mathcal{N}_{m'}$ and $m \neq m'$), constructing a new coalition structure $\mathcal{C}'$, the following conditions must be satisfied:
	(i) the utility of MU $n$ cannot be decreased, i.e. $\xi_{n}(\mathcal{C'}) \geq \xi_{n}(\mathcal{C})$,
	(ii) the utility of MUs in $\mathcal{C}_{m'}$ cannot be decreased, i.e., $\xi_{n'}(\mathcal{C'}) \geq \xi_{n'}(\mathcal{C})$, $\forall n'\in \mathcal{N}_{m'}$.
\end{definition}

\begin{definition}(Split-preferred Strategy A) \label{definition7}
	Given any coalition structure $\mathcal{C}$, if an MU $n$ chooses to create a single-user coalition $\mathcal{C}_{m'}$ (without leaving the current coalition) to construct a new coalition structure $\mathcal{C}'$, the following conditions must be satisfied:
	the utility of MU $n$ cannot be decreased, i.e. $\xi_{n}(\mathcal{C'}) \geq \xi_{n}(\mathcal{C})$.
\end{definition}

\begin{definition}(Split-preferred Strategy B) \label{definition8}
	Given any coalition structure $\mathcal{C}$, if an MU $n$ chooses to leave a coalition $\mathcal{C}_{m}$ (where $n \in \mathcal{N}_m$) to create a single-user coalition $\mathcal{C}_{m'}$, constructing a new coalition structure $\mathcal{C}'$, the following conditions must be satisfied:
	the utility of MU $n$ cannot be decreased, i.e. $\xi_{n}(\mathcal{C'}) \geq \xi_{n}(\mathcal{C})$.
\end{definition}

\begin{definition}(Leave-preferred Strategy) \label{definition9}
	Given any coalition structure $\mathcal{C}$, if an MU $n$ chooses to leave a coalition $\mathcal{C}_{m}$ (where $n \in \mathcal{N}_m$), constructing a new coalition structure $\mathcal{C}'$, the following conditions must be satisfied:
	the utility of MU $n$ cannot be decreased, i.e. $\xi_{n}(\mathcal{C'}) \geq \xi_{n}(\mathcal{C})$.
\end{definition}

\begin{algorithm}[t]
	\renewcommand{\algorithmicrequire}{\textbf{Input:}}
	\renewcommand{\algorithmicensure}{\textbf{Output:}}
	\caption{OCF-based Alternating Iteration Algorithm}
	\label{alg1}
	\textbf{Initialization:} Let $o \triangleq \frac{p}{\bar{p}}$. Set the initial coalition structure as $\mathcal{C} = \mathcal{C}^{0} \triangleq  \{ \{1\}, \ldots,\{N\}\}$, i.e., all MUs work independently, $o = 1$, $o_{pre} = 0$. Set $\tau=0$.
\\
	\begin{algorithmic}[1]
		\WHILE{$|o - o_{pre}| > \epsilon$}
		\STATE $o_{pre} = o$.
		\FOR{each $p_{k}\in \{(o_{pre}-\Delta^{\tau})\bar{p}, o_{pre}\bar{p}, (o_{pre}+\Delta^{\tau})\bar{p} \}$}
		\REPEAT
		\FOR{each MU $n\in \mathcal{N}$}
		\STATE Randomly choose a coalition $\mathcal{C}_m \in \mathcal{C}$, and  select one of the following strategies:
		\STATE \emph{Merge-preferred Strategy A}  in \textbf{Def.~\ref{definition5}}---Join coalition $\mathcal{C}_m$ and construct $\mathcal{C}'$;
		\STATE \emph{Merge-preferred Strategy B} in \textbf{Def.~\ref{definition6}}---Leave coalition $\mathcal{C}_m$ and join   $\mathcal{C}_{m'}$
to construct $\mathcal{C}'$;
		\STATE \emph{Split-preferred Strategy A} in \textbf{Def.~\ref{definition7}}---Create a single-user coalition $\mathcal{C}_{m'}$ and  construct $\mathcal{C}'$;
		\STATE \emph{Split-preferred Strategy B} in \textbf{Def.~\ref{definition8}}---Leave coalition $\mathcal{C}_m$ and create a single-user coalition $\mathcal{C}_{m'}$ to construct $\mathcal{C}'$;
		\STATE \emph{Leave-preferred Strategy} in \textbf{Def.~\ref{definition9}}---Leave coalition $\mathcal{C}_m$ to construct $\mathcal{C}'$;
		\ENDFOR
		\STATE $\mathcal{C} \leftarrow  \mathcal{C'}$.
		\UNTIL (Reach Stable Coalition Structure $\mathcal{C}^{S}$)
		\ENDFOR
		\IF{$u_{ECP}(o_{pre}\bar{p}) \leq u_{ECP}((o_{pre}+\Delta^{\tau})\bar{p})$ and $u_{ECP}((o_{pre}-\Delta^{\tau})\bar{p}) \leq u_{ECP}((o_{pre}+\Delta^{\tau})\bar{p})$}
		\STATE $o = \min(o_{pre}+\Delta^{\tau}, 1)$.
		\ELSIF{$u_{ECP}(o_{pre}\bar{p}) \leq u_{ECP}((o_{pre}-\Delta^{\tau})\bar{p})$ and $u_{ECP}((o_{pre}+\Delta^{\tau})\bar{p}) \leq u_{ECP}((o_{pre}-\Delta^{\tau})\bar{p})$}
		\STATE $o = \max(o_{pre}-\Delta^{\tau}, 0)$.
		\ELSE
		\STATE $o = o_{pre}$.
		\ENDIF
		\STATE \rev{$\Delta^{\tau+1} = 0.99\Delta^{\tau}$.}
		\STATE $\tau = \tau + 1$.
		\ENDWHILE
	\end{algorithmic}
	\textbf{Output:} $p^* = o\bar{p}$, $\mathbf{l}^{NE}(p^*)$ and $\bm{\upbeta}^{NE}(p^*)$.
\end{algorithm}

Fig. \ref{fig3} provides examples to illustrate the MU's atomic strategies.
In these examples, $\rm{MU}_{1}$ chooses to join ${C}_{2}$ from ${C}_{1}$ without leaving ${C}_{1}$ (Definition \ref{definition5}), $\rm{MU}_{1}$ chooses to join ${C}_{2}$ from ${C}_{1}$ and leaves ${C}_{1}$ (Definition \ref{definition6}),
$\rm{MU}_{1}$ chooses to create a single-user coalition ${C}_{2}$ without leaving ${C}_{1}$ (Definition \ref{definition7}),
$\rm{MU}_{1}$ chooses to create a single-user coalition ${C}_{2}$ and leaves ${C}_{1}$ (Definition \ref{definition8}),
and $\rm{MU}_{1}$ chooses to leave the current coalition ${C}_{1}$ (Definition \ref{definition9}).
The following  \textbf{Theorem \ref{Convergence}} shows the convergence of these atomic strategies.
\begin{theorem}(Convergence) \label{Convergence}
	Through the MUs' atomic strategies, arbitrary initial coalition structures will converge to a stable coalition structure $\mathcal{C^{\mathit{S}}}$.
\end{theorem}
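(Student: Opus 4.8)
The plan is to prove convergence by exhibiting a bounded potential function that strictly increases along every executed atomic move, and then to invoke finiteness of the state space so that no coalition structure can recur. Concretely, I would take as potential the system social welfare $\Phi(\mathcal{C}) \triangleq \sum_{n\in\mathcal{N}} \xi_n(\mathcal{C}) = \sum_{m=1}^{M} u_m$, evaluated at the ERC equilibrium of Theorem~2 (equation~(\ref{NE})) associated with $\mathcal{C}$; by construction $\Phi$ is bounded above, since $u_m = \frac{l_m}{l_\mathcal{N}}\theta_m - l_m p \le \theta_m$. The target is to show $\Phi(\mathcal{C}') > \Phi(\mathcal{C})$ whenever an MU actually executes one of the five atomic strategies of Definitions~\ref{definition5}--\ref{definition9} to pass from $\mathcal{C}$ to $\mathcal{C}'$, and then to conclude that a monotone, bounded sequence $\Phi(\mathcal{C}^0),\Phi(\mathcal{C}^1),\dots$ cannot revisit any structure, forcing termination at some $\mathcal{C}^S$ admitting no profitable atomic deviation, which is exactly the individually stable structure of~\cite{34}.

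First I would establish that the reachable state space is finite: every coalition is a subset of the $N$ MUs, so there are at most $2^N$ distinct coalitions, and constraint~(17) caps each MU's membership at $J$, whence the number of admissible coalition structures is finite. Hence any strictly $\Phi$-increasing trajectory must be finite, and it remains only to verify the per-move strict increase. Throughout, the detailed within-coalition allocation is irrelevant, because, as noted after Theorem~2, $u_m$ depends only on $l_m$, so each structure has a well-defined $\Phi$ through the closed-form NE~(\ref{NE}).

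Next I would verify monotonicity strategy by strategy. For the two merge strategies (Definitions~\ref{definition5} and~\ref{definition6}) the defining conditions guarantee that the deviating MU $n$ does not lose and that every member of the joined coalition does not lose; since a move is executed only when the mover strictly gains, the contributions of $\{n\}$ and of the joined coalition to $\Phi$ strictly increase. For the split and leave strategies (Definitions~\ref{definition7}--\ref{definition9}) the mover's utility $\xi_n$ does not decrease and the created single-user coalition is feasible, again yielding a strict gain for any executed move.

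The hard part will be controlling the externality induced by the shared denominator $l_{\mathcal{N}}=\sum_m l_m$ in the reward~(\ref{um_lm}): because the coalitions compete, a deviation that benefits the mover and the joined coalition can simultaneously shrink the equilibrium rewards of coalitions not involved in the move, whose utilities are not constrained by Definitions~\ref{definition5}--\ref{definition9}, so $\Phi$ is not automatically monotone and a naive social-welfare argument is incomplete. To close this gap I would track the vector of individual utilities directly and argue acyclicity: each executed move strictly raises the mover's own $\xi_n$ while weakly raising the affected coalition members' utilities, so the utility profile advances strictly in a lexicographic order through which no finite sequence of structures can cycle; alternatively, I would use the explicit equilibrium shares in~(\ref{NE-con}) to bound the externality term and show that the aggregate utility of the mover together with the affected members---a bounded, per-move strictly increasing quantity on a finite state space---cannot return to a previously visited value. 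Either route rules out cycles and forces convergence to the stable structure $\mathcal{C}^S$.
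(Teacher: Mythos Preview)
Your potential-function route is quite different from the paper's argument. The paper does not track any utility-based quantity; it argues purely combinatorially that the trajectory $\mathcal{C}^0\to\mathcal{C}^1\to\cdots$ never revisits a structure, by asserting a nesting relation $\mathcal{C}^0\subset\mathcal{C}^1\subset\cdots\subset\mathcal{C}^S$ among successive structures and then invoking finiteness of the set of admissible overlapping structures over $\mathcal{N}$ (via the Bell number). No appeal to the ERC equilibrium, to $\Phi$, or to welfare monotonicity is made at all.

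Your proposal, however, has a genuine gap that you correctly flag but do not close, and neither proposed rescue works as stated. The lexicographic route requires a \emph{fixed} total order on $\mathcal{N}$ along which the utility profile $(\xi_1,\dots,\xi_N)$ advances at every executed move; but the mover changes from step to step, and Definitions~\ref{definition5}--\ref{definition9} place no constraint whatsoever on $\xi_{n'}$ for $n'$ outside the mover and the joined coalition. Because all coalitions are coupled through $l_{\mathcal{N}}$ in~(\ref{um_lm}) and through the NE shares~(\ref{NE-con}), a Merge-A by MU~$n$ into $\mathcal{C}_m$ can strictly lower $\xi_{n'}$ for every $n'$ in every untouched coalition, so the new profile need not dominate the old one in any fixed lexicographic order---it improves only in the coordinates of $n$ and the members of $\mathcal{C}_m$ and may regress elsewhere. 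Your second route, bounding the aggregate externality loss by the mover's gain using the explicit shares~(\ref{NE-con}), is a quantitative inequality that does not follow from the definitions and that you have not established. To complete a potential-based proof you would need either a potential that is genuinely insensitive to the ERC externality, or a direct no-cycle argument on the (finite) state space that does not rely on scalar monotonicity---which is closer in spirit to what the paper attempts.
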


\begin{proof}
	See Appendix \ref{Appdendix_D}.
\end{proof}

\begin{table*}[t]
	\begin{center}
		\caption{Simulation Parameters}
		\label{Parameters}
		\begin{tabular}{llll}
			\toprule 
			\textbf{Parameter} & \textbf{Value} & \textbf{Parameter} & \textbf{Value}\\
			\midrule
			The given network latency factor, $z$ &  $5 \times 10^{-3}$ & The average generation time of each block, $T'$ &  $600$ s\\
			The data size of the block header except for nonce, $D_{m}$ &  $608$ bit & The unit cost for nonce hash computing, $c$ &  $0.8$ \\
			The adjustable difficulty parameter, $\pi$ &  0.5 & The transmission power of coalition $\mathcal{C}_{m}$, $\rho_{m}$ &  $0.1$ W \\
			The channel gain of coalition $\mathcal{C}_{m}$, $h_{m}$ &  $1 \times 10^{-8}$  & The power of noise in the transmission channel, $N_{0}$ & $-100$ dBm \\
			The CPU computation frequency of ECP, $f_{E}$ &  $1$ GHz   & The bandwidth of transmission, $W$ &  $20$ MHz  \\
			The CPU cycles for each nonce hash computing, $\omega$ &  $1 \times 10^{3}$ Mega cycles  &   &     \\
			\bottomrule
		\end{tabular}
	\end{center}
	\vspace{-5mm}
\end{table*}

\subsection{The ECP's Optimal Pricing Strategy in Stage I}\label{ECP_opt}
Now we analyze the ECP's optimal strategy in Stage I.
Based on the $\mathbf{l}^{NE}$ of all formed coalitions in Stage II, the ECP maximize its utility by optimizing its unit price for once hash computing.
We apply ${l}^{NE}_{m}(p)$ into Eq. (\ref{uECP_problem}) to obtain the following optimization problem $(\mathrm{P5})$, which is as follows:
\begin{align}\label{uECP_expression}
	&(\mathrm{P5}):~\mathop {\max }\limits_{p} ~{\rm{  }} u_{ECP} = \sum\limits_{m=1}^{M} l^{NE}_{m}(p) \cdot ( p- c) \\
	&~~~~~~~~~~~\mathrm{s.t.}~~{\rm{        }} (23). \nonumber
\end{align}

Note that the objective in $(\mathrm{P5})$ cannot have an explicit expression since ${l}^{NE}_{m}(p)$ depends on $p$ and $\bm{\upbeta}_{m}$.
In this case, we adopt the sub-gradient search \cite{37} for $p$ in the feasible domain $[0, \bar{p}]$.

Based on the above five atomic strategies of MUs and the sub-gradient search for $p$, we propose an OCF-based alternating iteration algorithm to solve the two-stage Stackelberg game, as described in \textbf{Algorithm \ref{alg1}}, where $\Delta$ is the step size, $\tau$ is the number of iterations, and $\epsilon$ is the convergence threshold.
Specifically, in each round (iteration) of Algorithm \ref{alg1} with a given $p_k$, each MU randomly chooses a coalition and tests the above five atomic strategies.

Since each MU $n$ is able to implement the above five atomic strategies without relying on any centralized entity, the proposed OCF-based alternating iteration algorithm can be implemented in a distributed approach.
Under the current coalition structure $\mathcal{C}$, each MU $n$ needs to evaluate its respective utility $\xi_{n}(\mathcal{C})$ through message propagation with other members of the coalition it joins.
In addition, given a current coalition structure $\mathcal{C}$ and the five atomic strategies, the computational complexity for each MU $n$ to find the next coalition does not exceed $\mathcal{O}(|\mathcal{C}|)$ \cite{25}.
\rev{Moreover, we analyze the overall time complexity of Algorithm 1.
	Assume that the coalition structure requires $\Gamma_{1}$ iterations to reach the stable state.
	Each MU will try five atomic strategies and select one of them.
	Therefore, the complexity of the inner loop is $\mathcal{O}(5N\cdot |\mathcal{C}|\cdot\Gamma_{1})$.
	Next, we assume that the outer loop requires $\Gamma_{2}$ rounds of iterations to reach the convergence condition.
	Each iteration will try three pricing strategies.
	Then, the overall time complexity of Algorithm 1 is $\mathcal{O}(15N\cdot|\mathcal{C}|\cdot\Gamma_{1}\Gamma_{2})$.}

\begin{figure*}[htb]
	\centering
	\hspace{-1mm}
	\subfigure[]{\label{fig:price_ecputi}\includegraphics[width=3in]{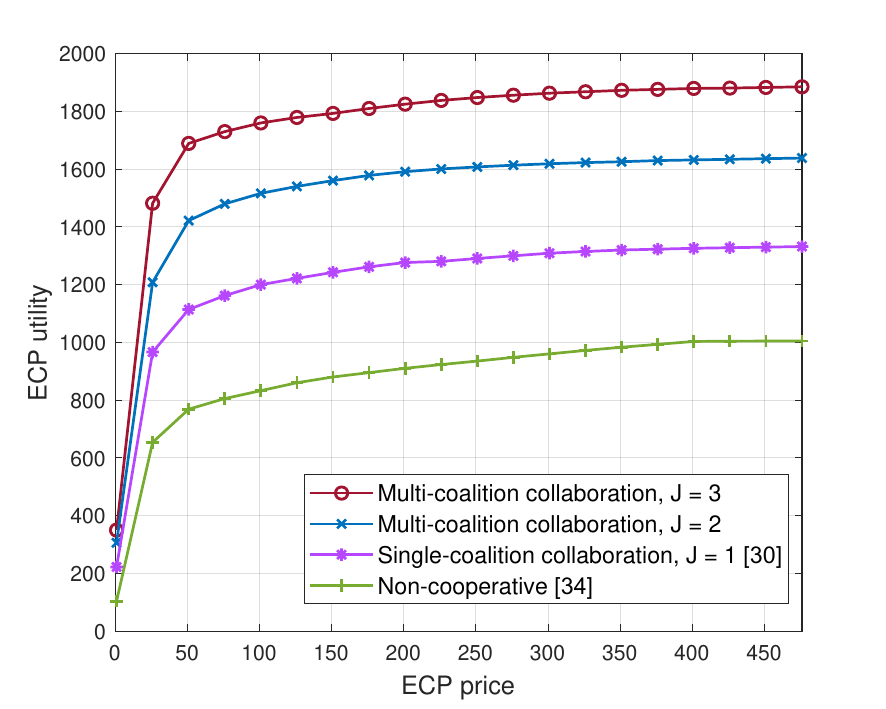}}
	~~
	\subfigure[]{\label{fig:price_sysuti}\includegraphics[width=3in]{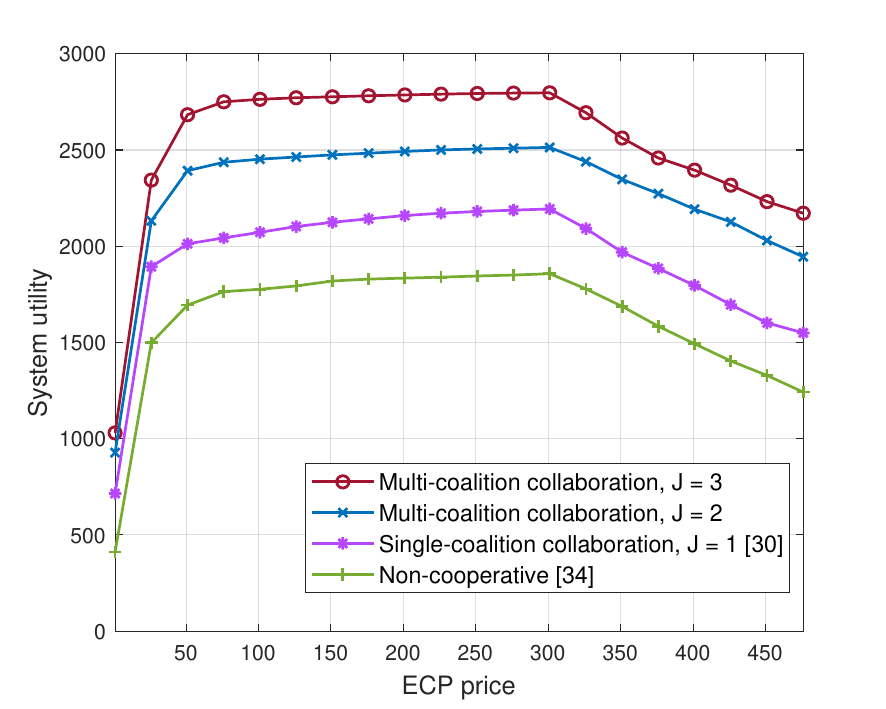}}
	~~
	\subfigure[]{\label{fig:price_NonceLength}\includegraphics[width=3in]{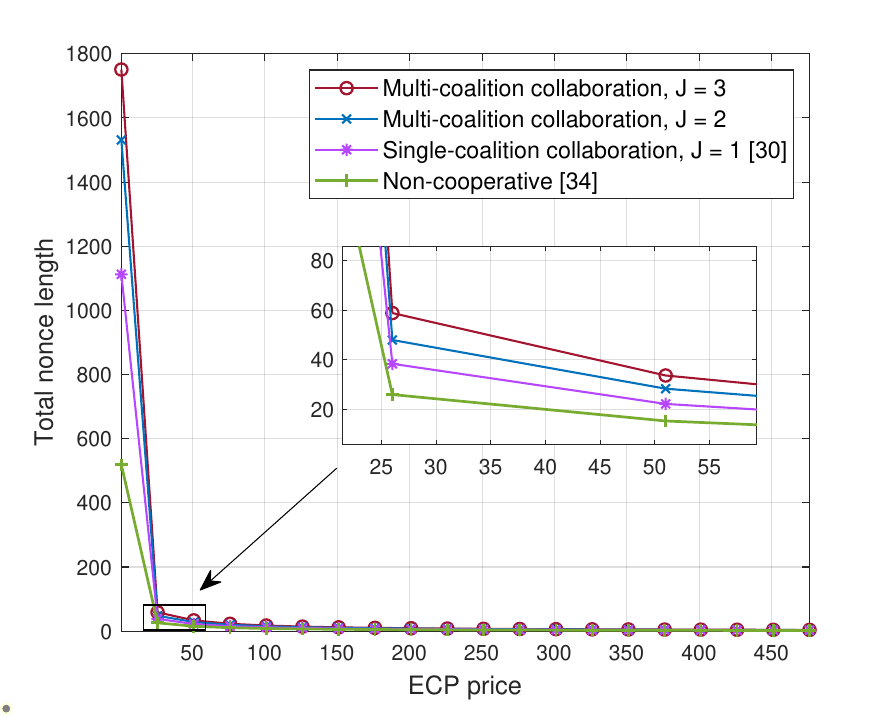}}
	~~
	\subfigure[]{\label{fig:price_grouping}\includegraphics[width=3in]{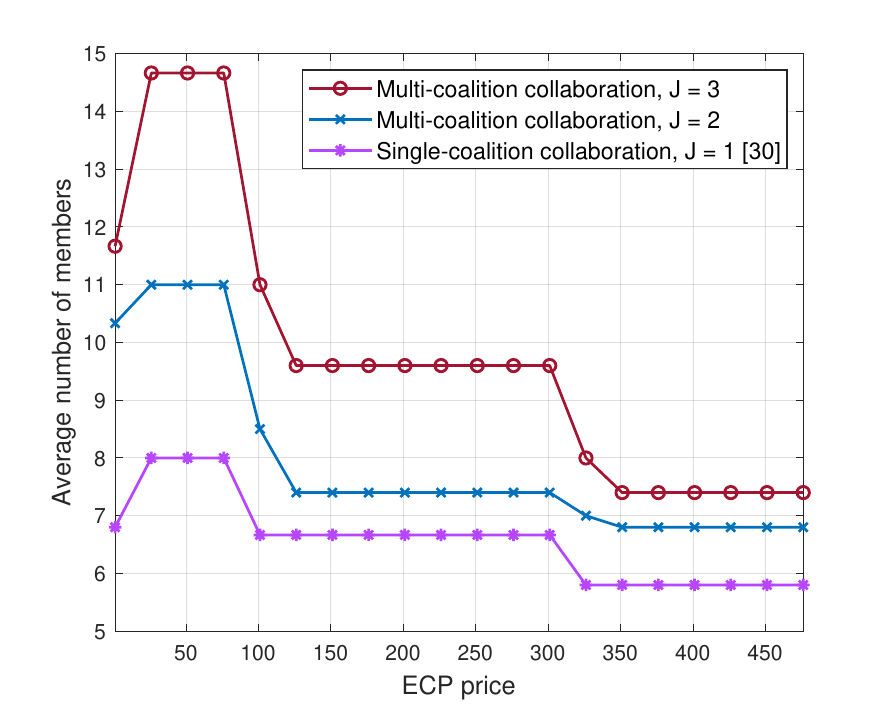}}
	\caption{Performance of different prices.  (a) ECP utility; (b) System utility; (c) Total nonce length of MUs; (d) Average number of members.}
	\label{fig:Effect of the price of ECP}
\end{figure*}

\begin{figure*}[htb]
	\centering
	\hspace{-1mm}
	\subfigure[]{\label{fig:MU_ecputi}\includegraphics[width=3in]{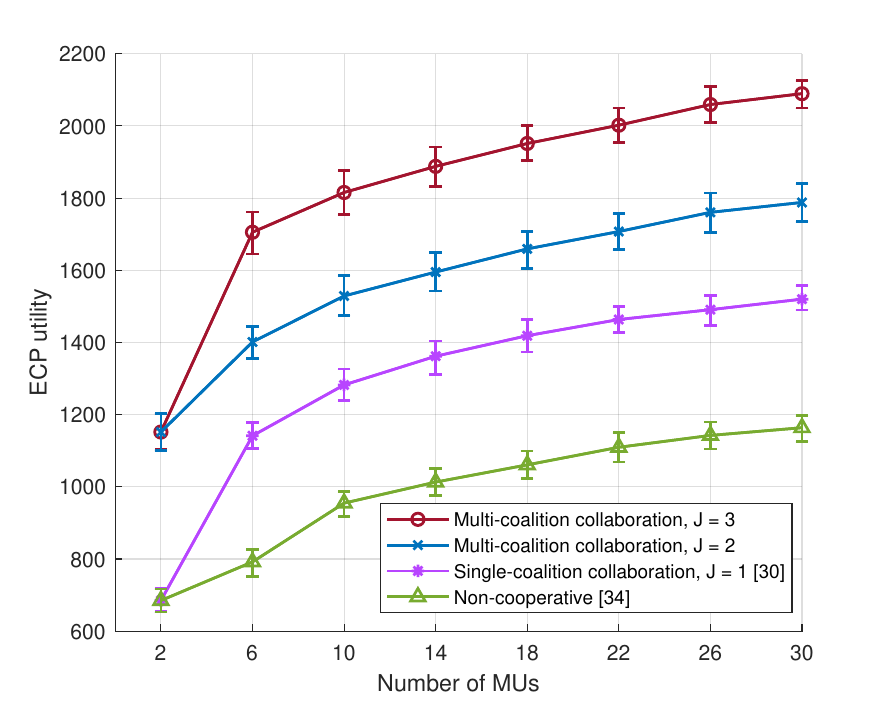}}
	~~
	\subfigure[]{\label{fig:MU_sysuti}\includegraphics[width=3in]{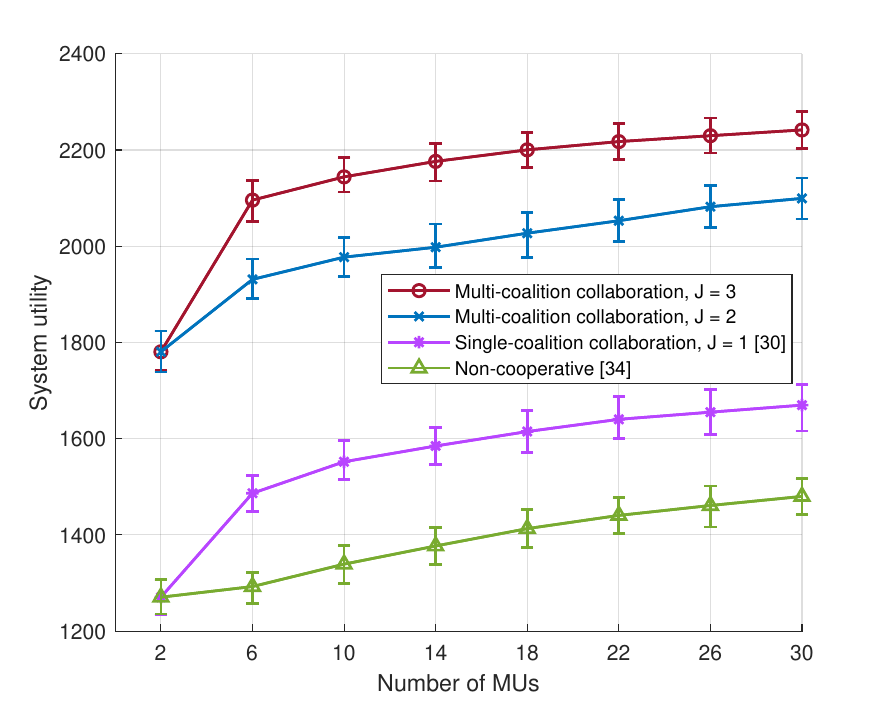}}
	~~
	\subfigure[]{\label{fig:MU_NonceLength}\includegraphics[width=3in]{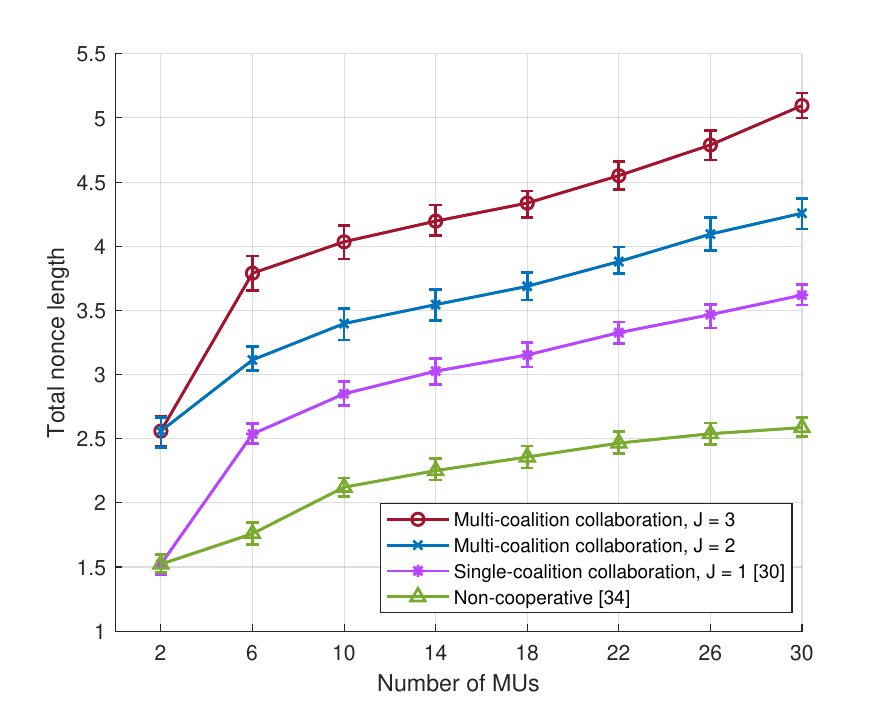}}
	~~
	\subfigure[]{\label{fig:MU_grouping}\includegraphics[width=3in]{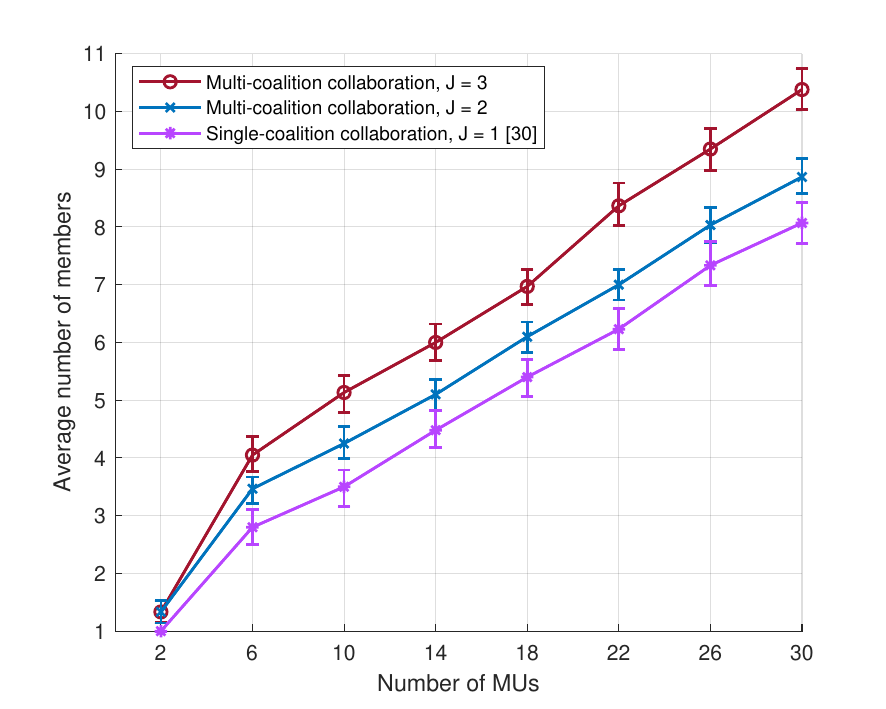}}
	\caption{Impact of the number of MUs. (a) ECP utility;  (b) System utility; (c) Total nonce length of MUs; (d) Average number of members.}
	\vspace{-3mm}
	\label{fig:Effect of the number of MUs}
\end{figure*}





\vspace{-1mm}
\section{Performance Evaluation}\label{simulation}

In this section, we provide extensive results to evaluate the performance of the proposed OCF-based alternating iteration algorithm.
We first present the simulation settings of the results.
Then, we show the performance of different prices in Stage II as the intermediate result.
Next, we show the impact of the number of MUs on the performance of the proposed algorithm.
Finally, we present the impact of different parameters on the performance of the proposed algorithm.

\subsection{Simulation Setting}
The simulation parameter settings is partly referred to \cite{30}, as shown in \textbf{TABLE \ref{Parameters}}.
In the simulation parameters, some of them are related to the real-world blockchain network, such as $D_{m}$, $z$, and $T'$, etc.
We assume that the blockchain system generates $1000$ transactions for MUs to collect, where each transaction fee follows uniform a distribution of $[0,100]$ units \cite{32}.
To evaluate the proposed algorithm, we set the collaboration factor $J=1,2,3$, where $J=1$ denotes the single-coalition collaboration mode \cite{25}.
Besides, the benchmark for comparison is the non-cooperative mode \cite{30}, in which each MU refuses to form coalitions.
All simulation results are averaged over $500$ times.
To evaluate the atomic strategies of MUs, we introduce a new metric, i.e., the average number of members in each coalition, which is given by:
\begin{equation}\label{equation:average member}
	\textstyle
	N_{avg} = \frac{\sum_{m = 1}^{M} |\mathcal{N}_{m}|}{M},  \forall \mathcal{C}_{m} \in \mathcal{C}.
\end{equation}

\subsection{Performance of Different Prices in Stage II}

Fig.~\ref{fig:Effect of the price of ECP} shows the performance of different prices $p$ in Stage II as the intermediate result, where the number of MUs is $N=20$, the fixed block reward is $B=1000$, and each block contains $I=10$ transactions.
The horizontal axis of the four sub-figures denotes the price of ECP $p$.

Fig.~\ref{fig:price_ecputi} and Fig.~\ref{fig:price_NonceLength} show the ECP's utility $u_{ECP}$, and the total nonce length of MUs $\sum_{m=1}^{M} l_{m}$  respectively under different $p$.
It can be seen that $u_{ECP}$ increases and $\sum_{m=1}^{M} l_{m}$ decreases as $p$ grows.
Specially, $\sum_{m=1}^{M} l_{m}$ first decreases rapidly and then tends to stabilize.
This is because the increase in $p$ leads to higher mining costs for each MU, which reduces the selected nonce length that is offloaded to the ECP.
When $p \leq 50$, the decrease rate of $\sum_{m=1}^{M} l_{m}$ is lower than the increase rate of $p$, resulting in $u_{ECP}$ growing rapidly;
When $p > 50$, the decrease rate of $\sum_{m=1}^{M} l_{m}$ gradually equalizes with the increase rate of $p$, causing  $u_{ECP}$ to stabilize (based on Eq. (\ref{uECP_expression})).
Compared with the non-cooperative mode, the single-coalition collaboration mode increases the ECP's utility by $32.67\%$.
Meanwhile, the multi-coalition collaboration mode with $J = 3$ can further enhances the ECP's utility by $41.52\%$.

Fig.~\ref{fig:price_sysuti} shows the blockchain system's utility under different $p$.
When $p \leq 300$, we can see that the system's utility increases as $p$ grows.
This implies that the ECP's utility plays a dominant role in the system's utility, and the growth of $p$ greatly enhances the ECP's utility, which in turn facilitates the system's utility;
When $p > 300$, we can see that the system's utility decreases rapidly as $p$ grows.
This implies that the MUs' total utility plays a dominant role in the system's utility, and the growth of $p$ leads to the reluctance of almost all MUs to participate in block mining.
Compared with the non-cooperative mode, the single-coalition collaboration mode increases the system's utility by $18.09\%$.
Meanwhile, the multi-coalition collaboration mode with $J = 3$ can further enhances the system's utility by $33.35\%$.

Fig.~\ref{fig:price_grouping} shows the average number of coalition members $N_{avg}$ under different $p$.
Note that there are no formed coalitions in the non-cooperative mode, and hence, there is no curve for the non-cooperative mode in the figure.
It can be seen that the $N_{avg}$ increases briefly and then declines in a segmental manner as $p$ grows, and the multi-coalition collaboration mode with $J = 3$ outperforms the other modes.
When $p$ is low, e.g., $p \leq 25$, MUs prefer to cooperate in order to increase their revenue as $p$ grows.
This is due to the fact that MUs can keep their own revenue high by not cooperating when $p$ is low, so they have no willingness to cooperate.
When $p$ is high, e.g., $p > 25$, MUs gradually withdraw from the joined coalitions due to the increase in mining costs and significantly reduce the purchase of computing resources.


\subsection{Impact of the Number of MUs}

Fig.~\ref{fig:Effect of the number of MUs} shows the impact of the number of MUs on the performance of the proposed OCF-based algorithm.
The horizontal axis of the four sub-figures denotes the number of MUs $N$.
Here, the fixed block reward is $B=1000$, and each block contains $I=10$ transactions.
All curves are with confidence intervals.

Fig.~\ref{fig:MU_ecputi} and Fig.~\ref{fig:MU_NonceLength} show the ECP's utility $u_{ECP}$ and the total nonce length of MUs $\sum_{m=1}^{M} l_{m}$ respectively under different $N$.
It can be seen that $u_{ECP}$ and $\sum_{m=1}^{M} l_{m}$ both increase as $N$ grows.
Specifically, the growth of $N$ makes MUs compete for computing resources more intensely, which leads to a higher demand for computing resources.
Intuitively, the ECP's utility also increases.
Compared with the non-cooperative mode, the single-coalition collaboration mode increases the ECP's utility by $11.85\% \sim 14.51\%$.
Meanwhile, the multi-coalition collaboration mode with $J = 3$ can further enhances the ECP's utility by $12.74\% \sim 16.96\%$.

Fig.~\ref{fig:MU_sysuti} shows the blockchain system's utility under different $N$.
We can see that the blockchain system's utility increases as $N$ grows.
This is because the ECP's utility plays a dominant role in the system's utility, and the growth of $N$ greatly enhances the ECP's utility, which also facilitates the system's utility.
Compared with the non-cooperative mode, the single-coalition collaboration mode increases the ECP's utility by $10.42\% \sim 12.48\%$.
Meanwhile, the multi-coalition collaboration mode with $J = 3$ can further enhance the ECP's utility by $12.64\% \sim 17.63\%$.

Fig.~\ref{fig:MU_grouping} shows the average number of coalition members $N_{avg}$ under different $N$.
We can see that $N_{avg}$ increases as $N$ grows.
This is because the increase in $N$ makes competition for computing resources more intense, leading to a greater tendency for MUs to cooperate in forming coalitions.
Since each MU can choose to join more coalitions in the multi-coalition collaboration mode with $J = 3$, $N_{avg}$ is significantly larger than the other modes.

\rev{Fig.~\ref{fig:converage_p} shows the convergence of the subgradient search for $p$.
Here, the number of MUs is $N=20$, the fixed block reward is $B=1000$, and each block contains $I = 10$ transactions.
It can be seen that $p$ converges to an optimum after a finite number of iterations.
As the number of iterations increases, the amplitude of the oscillations of all curves gradually decreases, which is due to the reduction of the step size $\Delta$.}
Fig.~\ref{fig:MU-optprice} shows the optimal price $p^{*}$ under different $N$.
When $N$ is less than $18$, the optimal price remains unchanged.
When $N$ is greater than $18$, the optimal price in the collaborative mode begins to decrease, while in the non-cooperative mode it remains unchanged.
This indicates that when $N$ is large, the collaborative mode significantly  reduces the demand for resources. Consequently, the ECP appropriately adjusts the price downward to prevent its utility from declining.
Furthermore, it also shows that as the collaboration factor $J$ increases, the demand for resources decreases, and therefore the ECP appropriately reduces $p$ to maintain the demand of MUs.

\subsection{Impact of Other Parameters}
Fig.~\ref{fig:Effect of tx number} shows the blockchain system's utility under different number of transactions $I$ packed into the block, where the number of MUs is $N=20$ and the fixed block reward is $B=1000$.
We can see that the blockchain system's utility increases as $I$ grows.
This is because each MU can collect more transactions with a larger $I$, and the multi-coalition collaboration mode can aggregate more profitable transactions, thereby increasing coalition utility.
Compared with the non-cooperative mode, the single-coalition collaboration mode increases the ECP's utility by $35.41\% \sim 86.67\%$.
Meanwhile, the multi-coalition collaboration mode with $J = 3$ can further enhance the ECP's utility by $17.39\% \sim 27.92\%$.

Fig.~\ref{fig:Effect of block reward} shows the blockchain system's utility under different fixed block reward $B$, where the number of users is $N=20$ and each block contains $I=10$ transactions.
We can see that the blockchain system's utility increases as $B$ grows.
Intuitively, this is due to the fact that an increase in $B$ leads to a rise in mining rewards, which in turn promotes the system's utility.
Compared with the non-cooperative mode, the single-coalition collaboration mode increases the ECP's utility by $18.63\% \sim 23.48\%$.
Meanwhile, the multi-coalition collaboration mode with $J = 3$ can further enhance the ECP's utility by $14.22\% \sim 33.31\%$.

\begin{figure}[t]
	\centering
	\includegraphics[width=3.1in]{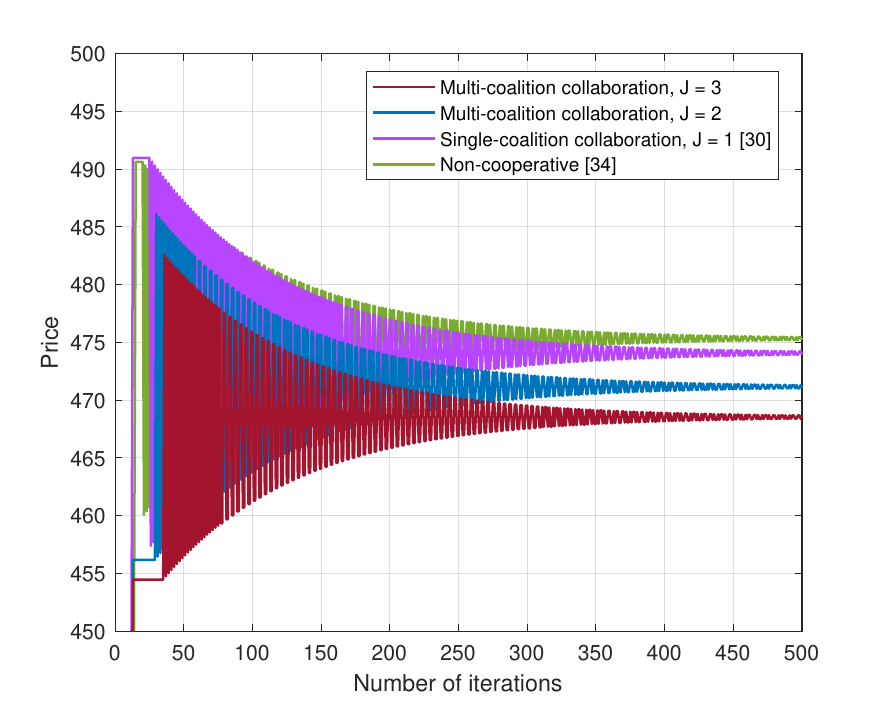}
	\vspace{-2mm}
	\caption{Price versus Number of iterations.}
	\label{fig:converage_p}
	\vspace{-4mm}
\end{figure}

\begin{figure}[t]
	\centering
	\includegraphics[width=3.1in]{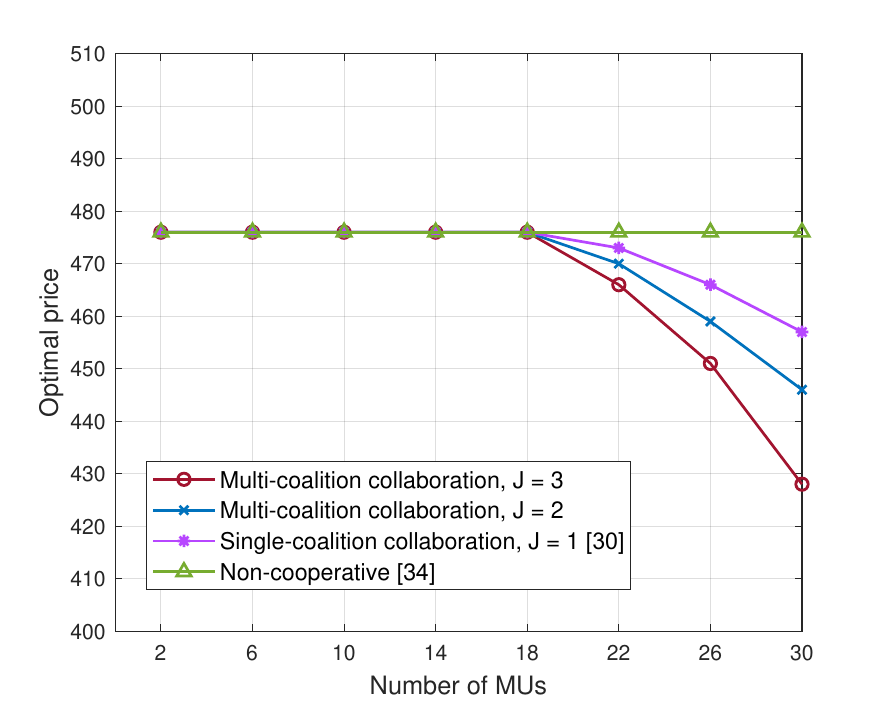}
	\vspace{-2mm}
	\caption{Optimal price versus Number of MUs.}
	\label{fig:MU-optprice}
	\vspace{-4mm}
\end{figure}

\begin{figure}[t]
	\centering
	\includegraphics[width=3.1in]{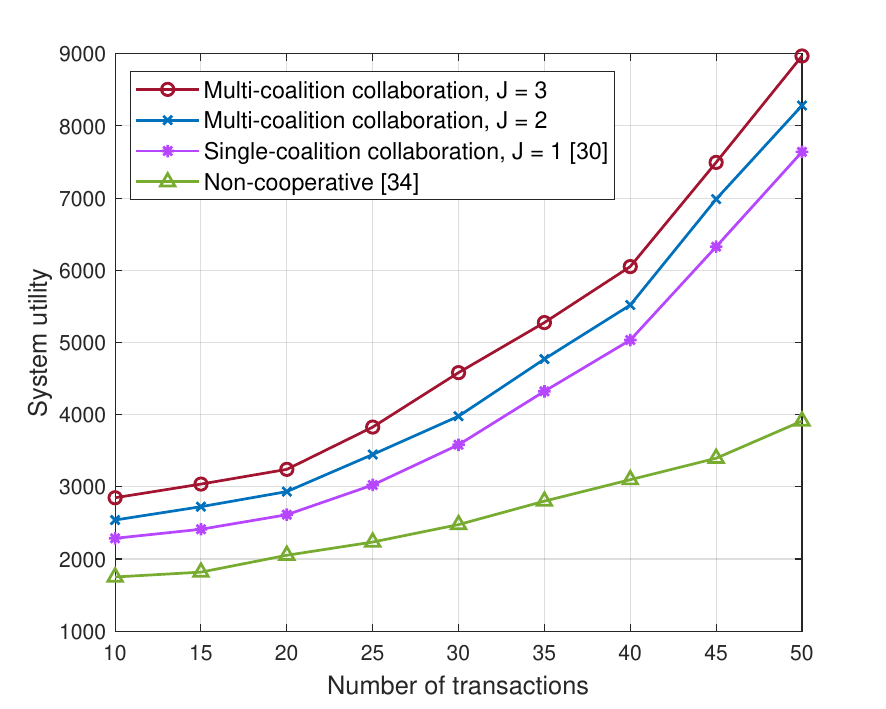}
	\vspace{-2mm}
	\caption{System utility versus Number of transactions.}
	\label{fig:Effect of tx number}
	\vspace{-4mm}
\end{figure}

\begin{figure}[t]
	\centering
	\includegraphics[width=3.1in]{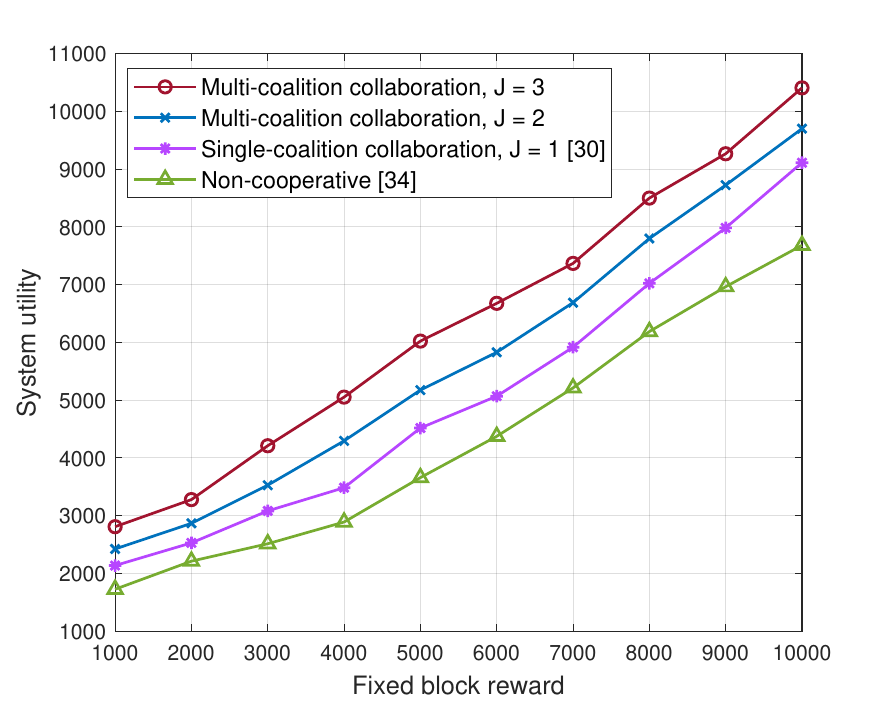}
	\vspace{-2mm}
	\caption{System utility versus Fixed block reward.}
	\label{fig:Effect of block reward}
	\vspace{-4mm}
\end{figure}

\section{Conclusion}\label{conclusion}

In this work, we have studied an MEC-assisted collaborative blockchain network with a multi-coalition collaboration mode.
To analyze the behavior of miners and the ECP in such a scenario, we have proposed a two-stage Stackelberg game, consisting of a resource pricing problem (for the ECP) in Stage I, and a coalition formation game (for the miners) and a resource competition game (for the coalitions) in Stage II.
We have derived the closed-form Nash equilibrium for the ERC game, and proposed an OCF-based alternating algorithm that converges to a stable coalition structure for the OCF game, as well as a near-optimal pricing strategy for the resource pricing problem.
Simulation results have shown that the proposed multi-coalition collaboration mode can significantly enhance block mining efficiency.
Moreover, we find that when the price is low, MUs' willingness to cooperate gradually increases as the price rises, and when the price is high, MUs' willingness to cooperate decreases gradually as the price rises.

\rev{Furthermore, this work focuses on mining efficiency optimization without considering security and power consumption issues.
Specifically. edge servers may be vulnerable to potential attacks such as Distributed Denial of Service (DDoS) attacks, eavesdropping, and side-channel attacks, etc. How to prevent these potential attacks while optimizing performance is a future research direction.
Additionally, reducing the system power of nonce hash computing while improving the mining efficiency is also a worthwhile research direction.}

\appendices
\section{Proof for Theorem 1}\label{Appdendix_A}
\begin{proof}
	It is clear to see that constraints (19) and (29) are convex sets and $u_{m}$ is a continuous function. Then, we take the second-order derivative of $u_{m}$ with respect to $l_{m}$, which can be given by:
	\begin{equation}\label{}
		\textstyle
		\frac{\partial ^{2} u_{m}}{\partial l_{m}^{2}} = \frac{-2\sum l_{-m} (B+r({\bm{\upbeta}}_{m}))e^{-\lambda zI} 2^{-\pi}}{(l_{m} + \sum l_{-m})^3} \leq 0.
	\end{equation}
	
	Thus, $u_{m}$ is a concave function with respect to $l_{m}$, and problem (P3) is a convex optimization. There must exist an NE in the ERC game (see Theorem 3.2 in \cite{33}).
\end{proof}

\section{Proof for Theorem 2}\label{Appdendix_B}
\begin{proof}
	For $l_m^* \in [0, \bar l_m]$, we use some simple algebraic transformations on the best response strategy in (\ref{best response}). Then we have:
	\begin{equation}\label{NE1}
		\textstyle
		\sum_{j=1}^{M} l_{j}^{+} = \sqrt{\frac{\theta_{m}(\sum_{j=1}^{M} l_{j}^{+} - l_{m}^{+})}{p}},
	\end{equation}
	
	Let $\psi = \sum_{j=1}^{M} l_{j}^{+}$. For all coalitions, we have:
	\begin{equation}\label{}
		\textstyle
		\begin{split}
			\left\{
			\begin{aligned}
				&l_{1}^{+} = \psi - \frac{p \psi^{2}}{\theta_{1}}, ~0 \leq l_{1}^{+} \leq \bar l_m,\\
				&l_{2}^{+} = \psi - \frac{p \psi^{2}}{\theta_{2}}, ~0 \leq l_{2}^{+} \leq \bar l_m,\\
				&~~~~~\vdots      \\
				&l_{M}^{+} = \psi - \frac{p \psi^{2}}{\theta_{M}}, ~0 \leq l_{M}^{+} \leq \bar l_m.
			\end{aligned}
			\right.
		\end{split}	
	\end{equation}
	
	Summing the above equations for all coalitions, we have:

	\begin{equation}\label{NE2}
		\textstyle
		\psi = \frac{M-1}{p \sum_{j=1}^{M} \frac{1}{\theta_{j}}}, ~ 0 \leq \psi \leq \sum_{m=1}^{M} \bar l_m.
	\end{equation}
	
	By substituting (\ref{NE2}) into (\ref{NE1}), we can obtain (\ref{NE-con}). Since $l_{m}^{NE}$ is an integer variable, we consider the rounded up and rounded down integer solutions of $l_{m}^{+}$, respectively. Then, we compare the utilities of two integer solutions and select the one with larger utility as $l_{m}^{NE}$. For the cases of $l_m^* < 0$ and $l_m^* > \bar l_m $,  $l_{m}^{NE}$ is bounded by $0$ and $\bar l_m$ by (\ref{best response}), respectively.
\end{proof}

\section{Proof for Lemma 1}\label{Appdendix_C}
\begin{proof}
	It can be observed from (\ref{um_lm}) and (\ref{best response}) that $u_{m}(l_{m})$ is continuous and differentiable in the domain $[0, \bar{l}_{m}]$.
	According to the Lagrange Mean Value Theorem \cite{add-2}, for any $l_{m}^{+}, ~l_{m}^{NE} \in [0, \bar{l}_{m}]$, there exists a $\zeta$ between $l_{m}^{+}$ and $l_{m}^{NE}$, such that:
	\begin{equation}\label{}
		\textstyle
		|u_{m}(l_{m}^{+}) - u_{m}(l_{m}^{NE})| = |\nabla u_{m}(\zeta)||l_{m}^{+} - l_{m}^{NE}|.
	\end{equation}
	
	It can be also observed from (\ref{best response}) that $u_{m}(l_{m})$ is gradient bounded in the domain $[0, \bar{l}_{m}]$, and then we have:
	\begin{equation}\label{}
		\textstyle
		|\nabla u_{m}(\zeta)| \leq G,
	\end{equation}
	where $G = \max_{l_{m}}|\nabla u_{m}(l_{m})|$ is the gradient upper bound constant.
	Since the maximum distance between $l_{m}^{+}$ and $l_{m}^{NE}$ is 1, and then we have:
	\begin{equation}\label{}
		\textstyle
		|u_{m}(l_{m}^{+}) - u_{m}(l_{m}^{NE})| \leq G |l_{m}^{+} - l_{m}^{NE}| \leq G.
	\end{equation}
\end{proof}

\section{Proof for Theorem 3}\label{Appdendix_D}
\begin{proof}
	Given any initial coalition structure $\mathcal{C^{\mathrm{0}}}$, MUs choose the atomic strategies that lead to changes in the coalition structure, represented by $\mathcal{C^{\mathrm{0}}} \to \mathcal{C^{\mathrm{1}}} \to \cdots \to \mathcal{C}^{s} \to \cdots$. Each coalition structure is a non-empty subset of $\mathcal{N}$:
	\begin{equation}\label{equation:coalition structure}
		\textstyle
		\mathcal{C}^{\mathit{s}} \subseteq \mathcal{X}(\mathcal{N}), ~\forall s,
	\end{equation}
	where $\mathcal{X}(\mathcal{N})$ is the set of all possible coalition structures for $\mathcal{N}$. Meanwhile, it can be seen that the new coalition structure contains the old coalition structure according to Definitions \ref{definition5}, \ref{definition6}, \ref{definition7}, \ref{definition8}, and \ref{definition9}. Their relationship can be expressed as
	\begin{equation}\label{equation:coalition relationship}
		\textstyle
		\mathcal{C^{\mathrm{0}}} \subset  \mathcal{C^{\mathrm{1}}} \subset  \cdots \subset  \mathcal{C}^{s} \subset  \cdots \subset \mathcal{C}^{S}.
	\end{equation}
	
	From Eq. (\ref{equation:coalition structure}) and Eq. (\ref{equation:coalition relationship}), we can see that each time MU chooses the atomic strategies, it results in a previously unvisited coalition structure. Based on the known fact that the partitions of a coalition structure are finite, as given by the Bell number \cite{35}, the number of changes in the coalition structure is finite. Thus, arbitrary initial coalition structures will converge to a stable coalition structure $\mathcal{C^{\mathit{S}}}$.
\end{proof}


\begin{IEEEbiography}[{\includegraphics[width=1in,height=1.25in,clip,keepaspectratio]{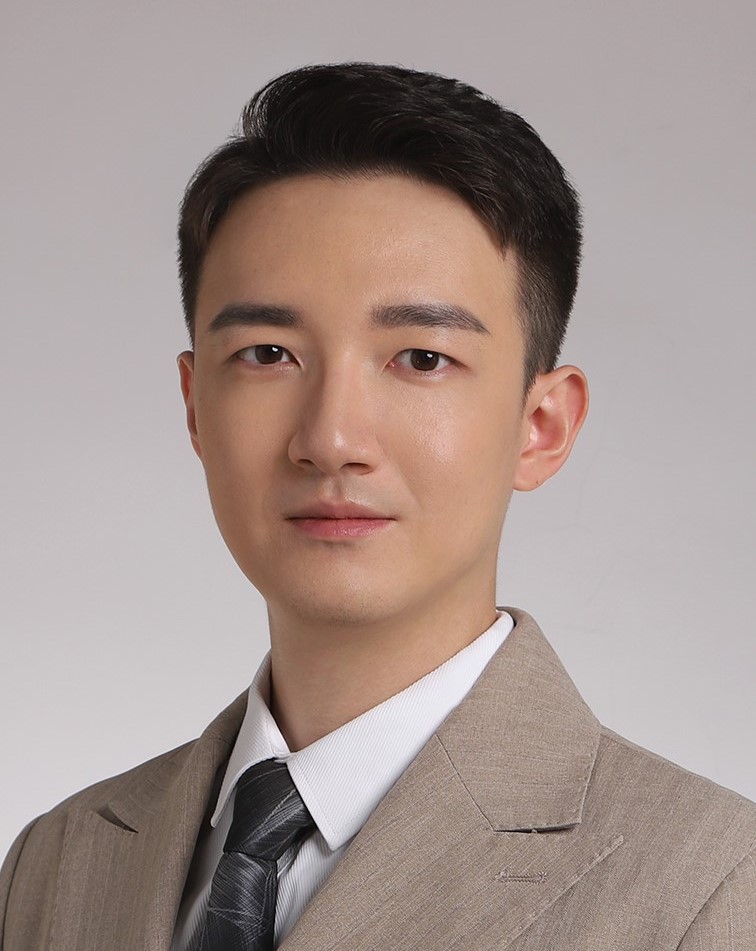}}]{Licheng Ye} (Graduate Student Member, IEEE) is currently working toward the Ph.D. degree with the School of Electronics and Information Engineering, Harbin Institute of Technology, Shenzhen, China. He is also a visiting Ph.D. student in Singapore University of Technology and Design, Singapore.
He received the M.S. degree from Harbin Institute of Technology, China, in 2022.
His research interests include mobile edge computing, blockchain, and federated learning.
\end{IEEEbiography}

\begin{IEEEbiography}[{\includegraphics[width=1in,height=1.25in,clip,keepaspectratio]{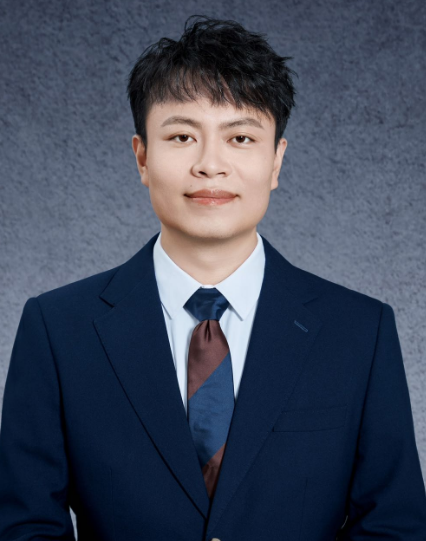}}]{Zehui Xiong} (Senior Member, IEEE) is currently a Full Professor with the School of Electronics, Electrical Engineering and Computer Science, Queen's University Belfast, United Kingdom. Prior to that, he was with Singapore University of Technology and Design, and Nanyang Technological University (NTU). He received his Ph.D. degree from NTU and was a visiting scholar with Princeton University and University of Waterloo.Recognized as a Clarivate Highly Cited Researcher, he has published over 250 peer-reviewed research papers in leading journals, with numerous Best Paper Awards from international flagship conferences. Featured in Forbes Asia 30U30, he serves as the Editor for many leading journals and Chair for numerous international conferences.His honors include the IEEE Asia Pacific Outstanding Young Researcher Award, IEEE VTS Early Career Award, IEEE Early Career Award for Excellence in Scalable Computing, IEEE Technical Committee on Blockchain and Distributed Ledger Technologies Early Career Award, IEEE Internet Technical Committee Early Achievement Award, IEEE TCSVC Rising Star Award, IEEE TCI Rising Star Award, IEEE TCCLD Rising Star Award, IEEE ComSoc Outstanding Paper Award, IEEE Best Land Transport Paper Award, IEEE Asia Pacific Outstanding Paper Award, IEEE CSIM Technical Committee Best Journal Paper Award, IEEE SPCC Technical Committee Best Paper Award, and IEEE Big Data Best Influential Conference Paper Award.
\end{IEEEbiography}

\begin{IEEEbiography}[{\includegraphics[width=1in,height=1.25in,clip,keepaspectratio]{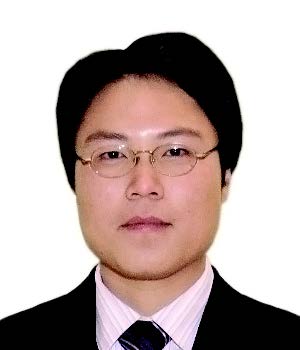}}]{Lin Gao} (Senior Member, IEEE) is a Professor at the School of Electronics and Information Engineering, Harbin Institute of Technology, Shenzhen, China. He received the Ph.D. degree in Electronic Engineering from Shanghai Jiao Tong University, Shanghai, China, in 2010. His main research interests are in the interdisciplinary area between game theory, optimization, and machine learning, with particular focuses on reinforcement learning, federated learning, crowd/edge intelligence, mobile edge computing,   cognitive communication and networking. He is the co-recipient of 5 Best Paper Awards from leading conference proceedings on wireless communications and networking. He received the IEEE ComSoc Asia-Pacific Outstanding Young Researcher Award in 2016.
\end{IEEEbiography}

\begin{IEEEbiography}[{\includegraphics[width=1in,height=1.25in,clip,keepaspectratio]{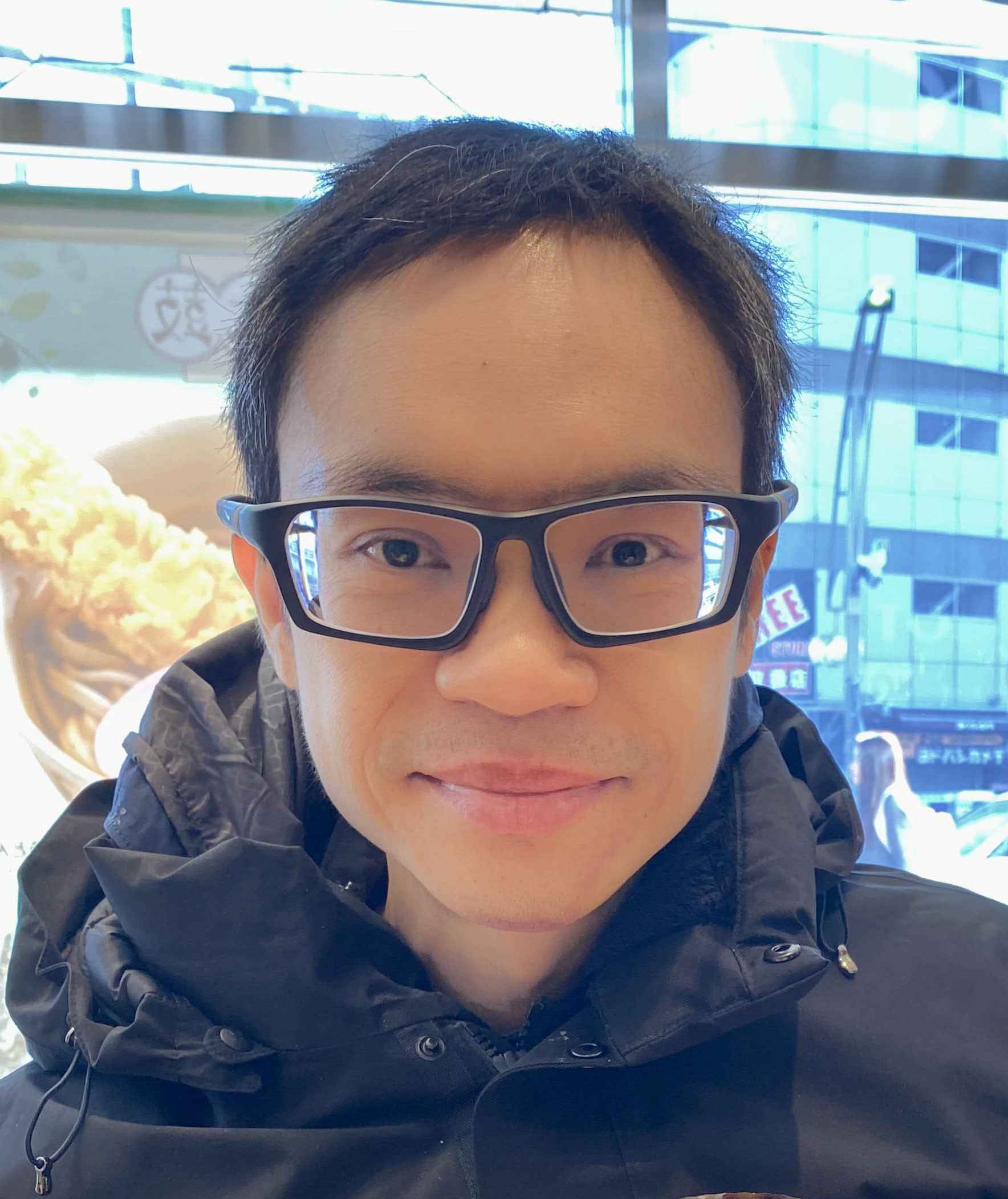}}]{Dusit Niyato} (M'09-SM'15-F'17) is a professor in the College of Computing and Data Science, at Nanyang Technological University, Singapore. He received B.Eng. from King Mongkuts Institute of Technology Ladkrabang (KMITL), Thailand and Ph.D. in Electrical and Computer Engineering from the University of Manitoba, Canada. His research interests are in the areas of mobile generative AI, edge intelligence, quantum computing and networking, and incentive mechanism design.
\end{IEEEbiography}

\end{document}